\theoremstyle{remark}
\newtheorem{theorem}{Theorem}
\newtheorem{assumption}{Assumption}
\begin{document}
\title{Decentralized Handover Parameter Optimization with MARL for Load Balancing in 5G Networks}
\author{Yang Shen~\orcidlink{0009-0000-0172-7955},
        Shuqi Chai~\orcidlink{0000-0002-8250-7782}, 
	Bing Li~\orcidlink{0000-0001-5226-7557},
        Xiaodong Luo,
        Qingjiang Shi~\orcidlink{0000-0003-0507-9080}, and
	Rongqing Zhang~\orcidlink{0000-0003-3774-6247}
	
\thanks{Yang Shen, Bing Li, and Rongqing Zhang are with the School of Computer Science and Technology, Tongji University, Shanghai 201804, China (e-mail: stanleyshen@tongji.edu.cn; lizi@tongji.edu.cn; rongqingz@tongji.edu.cn).}
\thanks{Shuqi Chai is with the Shenzhen Research Institute of Big Data, Shenzhen 518172, China (e-mail: schai@sribd.cn)}
\thanks{Xiaodong Luo is with the Chinese University of Hong Kong, Shenzhen, Shenzhen 518172, China (e-mail: xiaodongluo@cuhk.edu.cn)}
\thanks{Qingjiang Shi is with the School of Computer Science and Technology, Tongji University, Shanghai 201804, China, and also with the Shenzhen Research Institute of Big
Data, Shenzhen 518172, China (e-mail: shiqj@tongji.edu.cn)}
}

\maketitle
\thispagestyle{firstpage}
\begin{abstract}

In cellular networks, cell handover refers to the process where a device switches from one base station to another, and this mechanism is crucial for balancing the load among different cells. Traditionally, engineers would manually adjust parameters based on experience. However, the explosive growth in the number of cells has rendered manual tuning impractical.  Existing research tends to overlook critical engineering details in order to simplify handover problems. In this paper, we classify cell handover into three types, and jointly model their mutual influence. To achieve load balancing, we propose a multi-agent-reinforcement-learning (MARL)-based scheme to automatically optimize the parameters. To reduce the agent interaction costs, a distributed training is implemented based on consensus approximation of global average load, and it is shown that the approximation error is bounded. Experimental results show that our proposed scheme outperforms existing benchmarks in balancing load and improving network performance.

\end{abstract}
\begin{IEEEkeywords}
Cell handover, load balancing, 5G network, multi-agent reinforcement learning (MARL),  optimization.
\end{IEEEkeywords}

\section{Introduction}

\par 5G networks offer the capacity to connect many more devices compared to previous generations \cite{5G1,5G2}. Such a large number of connections pose significant challenges for base station (BS)  load balancing, primarily due to the uneven distribution and highly dynamic nature of user devices \cite{LB1,LB2}. In hotspot areas, cells experience congestion due to excessive demands, while the resources of idle cells remain underutilized. To address this issue, busy cells must offload some of their users to the surrounding idle cells through a process known as cell handover \cite{LB3}. Cell handover is essential for facilitating network load balancing and ensuring continuous connectivity. This process involves evaluating a set of decision conditions with adjustable threshold parameters to determine the required signal quality and the handover trigger mechanism \cite{3gpp-ts38104,3gpp-ts38133}. Traditionally, engineers relied on expert knowledge to manually adjust parameters.  However, in the current era of dense B5G/6G BS deployment, the complex interdependencies among handover parameters render manual adjustment methods both time-consuming and impractical.

\par To address this problem, current literature has explored various modeling and optimization methods for coordinating handover parameters. These approaches can be broadly categorized into heuristic-based and learning-based methods, each with distinct characteristics and applications. Heuristic-based approaches rely on predefined rules and empirical models to dynamically adjust handover parameters. In \cite{DDSO}, a genetic algorithm was proposed to optimize A5 parameters for inter-frequency handovers. The authors in \cite{HO3} proposed an adaptive handover parameter optimization method for mobility management, aiming to minimize unnecessary handovers such as ping-pong events and radio link failures. The study in \cite{HO2} introduced a feedback controller based self-organizing network (SON) algorithm for optimizing inter-frequency handovers, where parameters were adjusted depending on cell traffic and mobility conditions. In \cite{HO4}, a fuzzy logic controller combined with a weighted function was employed to self-optimize handover control parameters. In \cite{HCPSO}, a self-optimization method was developed for three key handover control parameters, considering factors such as channel conditions to minimize handovers and sustain good throughput. On the other hand, learning-based approaches leverage environmental feedback and machine learning methods to enable data-driven parameter optimization. The work in \cite{HO8} proposed an intelligent dynamic handover parameter optimization strategy to reduce handover failures and redundant handovers. In \cite{GraphHO}, a graph-based approach was used to model handover interactions between overlapping cells. The optimization was carried out using a contextual bandit approach combined with graph convolutional networks. The work in \cite{HO1} proposed an individualistic dynamic handover parameter optimization for 5G networks. It modeled handover control parameters using automatic weight functions tailored to user-specific conditions, such as SINR and cell load. In \cite{HO5}, handover parameters were modeled using machine learning and data mining techniques. The optimization strategy considered radio frequency (RF) conditions at the cell-edge and base station load levels to determine optimal handover parameters. Article \cite{HO7} adopted a Q-learning approach to model and optimize handover parameters between radio and optical networks. The authors in \cite{HO6} applied a temporal-difference learning approach to optimize handover parameters in high-speed railway communications considering the continuously changing environment.

\par However, these works are subject to numerous limitations. First, in an attempt to simplify the handover problem, existing works sacrificed the integrity of the handover mechanism and neglected critical engineering details, resulting in solutions impractical for real-world deployment. Specifically, these works focused solely on either intra-frequency or inter-frequency handovers, failing to address the complexities arising from their coexistence.  Additionally, they did not configure distinct parameters for different handovers objectives, such as coverage-based and priority-based purpose. Second, these studies primarily aimed at maximizing overall system performance while neglecting other crucial factors, such as network load balancing and user fairness. It is very essential to implement load balancing for 5G networks to enhance resource utilization, and support the scalability required by diverse and massive device connectivity. Lastly, the proposed solutions in these works—whether centralized or distributed—relied heavily on high-frequency, wide-area, real-time information exchange, which brings critical issues such as delays and signaling overhead for large-scale networks.

\par Motivated by these limitations, in this paper, we propose to construct an engineering-oriented handover model comprising multiple cells and user equipments (UEs), which fully adheres to the seamless handover process in commercial 5G networks. In addition, to enhance network load balancing through cell cooperation, we introduce a novel decentralized handover parameter optimization scheme based on multi-agent reinforcement learning (MARL), termed MADEHO. The main contributions of our work are summarized as follows:

\begin{itemize}
    \item To effectively optimize the large-scale handover parameters in real 5G networks, we present a comprehensive handover model that fully accounts for engineering details. We begin by categorizing handovers into three types based on different objectives: coverage-based intra-frequency handover, coverage-based inter-frequency handover, and priority-based inter-frequency handover. 
    \item To achieve network load balancing, the problem is formulated as minimizing the standard deviation of cell loads—a joint optimization problem that accounts for the coexistence of different handover types within the same cell and the interdependencies among neighboring cells. Specifically, to facilitate dynamic parameter adjustment in response to user traffic fluctuations, each cell is capable to be aware of the UE distribution and connection status through the Measurement Report (MR) data. To address feedback delays caused by handover latency, we employ a rolling window method for the observation space. 
    \item In the MARL environment, to reduce the cost of real-time global interaction, we implement a distributed training approach based on the consensus approximation of global average load. We demonstrate that, under given assumptions, the approximation error is bounded by a constant. Experimental results show that our scheme significantly outperforms existing methods in terms of load balancing, system throughput, ping-pong rate, and other key metrics.  
\end{itemize}

The remainder of this paper is structured as follows: Section \ref{SystemModel} presents the system model, including scenario descriptions and channel models. Section \ref{marl} outlines the proposed MARL-based optimization scheme. Section \ref{SimulationResults} details the simulation setup and results, while Section \ref{conclusion} concludes the paper.

\section{System Model}\label{SystemModel}

\subsection{Communication Model}
\par Consider an urban macro communication scenario \cite{urbanmacro} consisting of $M$ 5G New Radio (NR) \cite{newradio} cells and $K$ ground UEs, where the cells provide downlinks for UEs. Each cell refers to the effective coverage area provided by a BS on a specific frequency band. The central location of cell $m$ is $q_m=[x_m,y_m]$, whose central frequency and bandwidth are denoted as $F_m$ and $B_m$, separately.

\par We focus on a period which is equally divided into $T$ time slots and denote the location of UE $k$ at time slot $t$ as $l_{k,t}=\left[x_{k,t}, y_{k,t}\right]$. Each UE is served by one cell in a time slot, which can be described with the binary variables $\alpha_{m,k,t}$. $\alpha_{m,k,t}=1$ means the UE $k$ is served by the cell $m$ at time slot $t$ and $\alpha_{m,k,t}=0$ otherwise. So the following constraints hold:
\begin{gather}
    \alpha_{m,k,t} \in \{0,1\},\forall m,k,t,\\
   \sum_{m=1}^M \alpha_{m,k,t} = 1, \forall k,t.
\end{gather}

\par The channels in urban macro communication scenario include both Line-of-Sight (LOS) conditions and Non-Line-of-Sight (NLOS) conditions \cite{LOSNLOS}. According to the standard channel models in \cite{3gpp-tr38901}, the path loss from cell $m$ to UE $k$ at time slot $t$ in $\text{dB}$ can be separately formulated as:
\begin{gather}
PL_{m,k,t}^{LOS}= 28.0 + 22\log_{10}(d_{m,k,t})  + 20\log_{10}(F_m),
     \\
PL_{m,k,t}^{NLOS} = 32.4 + 30\log_{10}(d_{m,k,t})+ 20\log_{10}(F_m),
\end{gather}
where $d_{m,k,t}=\Vert q_m-l_{k,t}\Vert$ represents the distance between the center of cell $m$ and UE $k$ at time slot $t$. We denote the probability of the LOS connection as $Pr^{LOS}_{m,k,t}$ and the probability of the NLOS connection as
 $Pr^{NLOS}_{m,k,t}=1- Pr^{LOS}_{m,k,t}$. The average path loss can be expressed as 
\begin{align}
PL_{m,k,t}^{avg} = PL_{m,k,t}^{LOS} \cdot Pr^{LOS}_{m,k,t}  +PL_{m,k,t}^{NLOS}\cdot Pr^{NLOS}_{m,k,t}.
\end{align}
\par The transmitter antenna gain and receiver antenna gain in $\text{dB}$ are labeled as $g_{tx}$ and $g_{rx}$ respectively. The reference signal received power (RSRP) of UE $k$ from cell $m$ at time slot $t$ is 
\begin{equation}\label{rsrp}
    G_{m,k,t}=h \left( P-PL_{m,k,t}^{avg}+g_{tx}+g_{rx}\right),
\end{equation}
where $P$ is the transmit power of the BSs and $h$ is the fast fading factor following Rayleigh distribution \cite{rayleigh}. 
\par When a cell sends signal to its serving UE, the interference from neighbouring cells of the same frequency significantly reduces the transmission rate. We introduce a binary variable $I_{m,j}$ to record if the two cells $m$ and $j$ use the same frequency. $I_{m,j}=1$ means that they adopt the same frequency and their signal interfere with each other, $I_{m,j}=0$ otherwise. The signal-to-interference-plus-noise ratio (SINR) of the channel linking cell $m$ and UE $k$ can be formulated as
\begin{gather}
  \gamma_{m,k,t}=\frac{G_{m,k,t}}{ \displaystyle\sum_{j=1,j\neq m}^M I_{m,j} G_{j,k,t}+\sigma ^2},
\end{gather}
where $\sigma ^2$ means the power of the additive white Gaussian noise (AWGN) at the receiver. When a cell serves multiple UEs simultaneously, it equally allocates resource blocks (RBs) to them. Without loss of generality, the transmission rate between cell $m$ and UE $k$ at time slot $n$ can be written as
 \begin{align}\label{rate}
 	R_{m,k,t}= \frac{\alpha_{m,k,t}}{\sum_{j=1}^K \alpha_{m,j,t}} B_m \log_2(1+\gamma_{m,k,t}).
 \end{align}
\begin{figure}[!htb] 
\centering 
\includegraphics[width=0.48\textwidth]{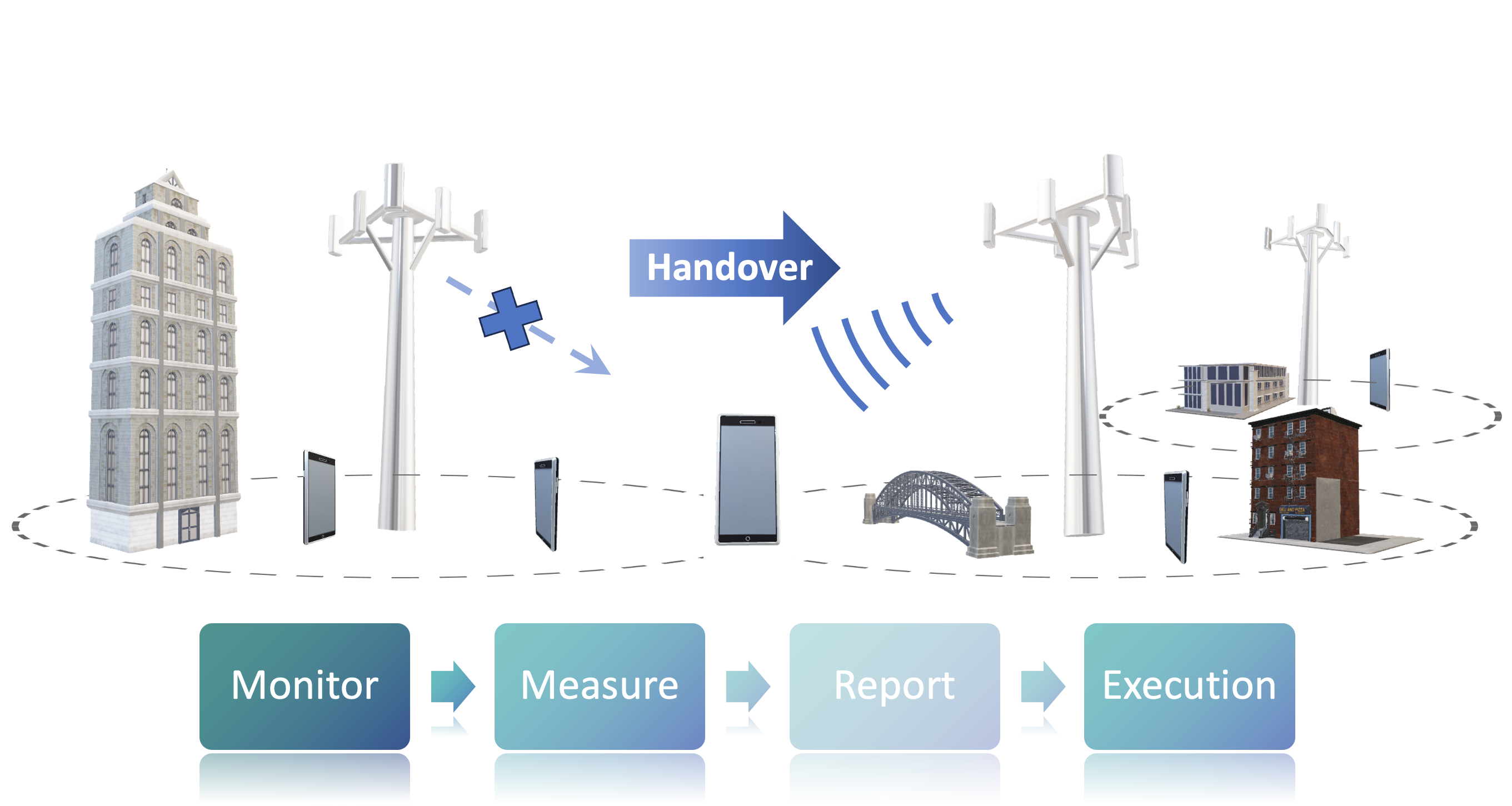} 
\caption{An illustrated 5G cell handover scenario} 
\label{Fig.scenario} 
\end{figure}

\subsection{Handover Procedure} \label{Handover Procedure}

Handover is the process of transferring an active user connection from one cell to another to maintain service continuity as the user moves. As illustrated in Fig. \ref{Fig.scenario}, when a UE moves to the edge of the serving cell, its received signal quality deteriorates, necessitating a switch to a neighboring cell to ensure communication continuity. In 5G protocols, the handover involve four steps \cite{3gpp-ts38331}: monitor, measure, report, and execute. The step details of switching to cells with the same frequency as the serving cell are different from those with different frequencies. For simplicity, we use \textbf{intra-frequency} and \textbf{inter-frequency} to describe whether the frequency of a neighboring cell is the same as that of the serving cell. When a UE $k$ is handed to a new cell $m$ at time slot $t$, we record the updated association by setting $\alpha_{m,k,t}=1$ and $\alpha_{j,k,t}=0, \forall j \neq k$.

The four steps of the Handover Procedure are as follows:

\begin{itemize}
    \item \textbf{Monitor:} When the signal quality of the connected cell remains below a threshold $Z$ for a duration $H_1$, the UE begins to monitor the signal quality of surrounding cells. The signal quality is typically represented by the RSRP $G_{m,k,n}$ as defined in Equation (\ref{rsrp}). This step is crucial for switching to inter-frequency cells, as the UE needs to reconfigure elements such as frequency synthesizers and antennas to monitor signals across multiple frequencies. However, for intra-frequency cells, this step is not required, as the UE can receive multiple intra-frequency signals without adjusting hardware elements.

    \item \textbf{Measure:} Once monitoring is triggered, the UE measures the signal qualities (i.e., RSRPs) of neighboring cells at each time slot. Continuous measurement is necessary because the wireless channel fades and signal strength levels fluctuate over time.
    
    \item \textbf{Report:} If the signal quality of a neighboring cell is satisfactory, the UE reports this event to the network. We use two distinct criteria to assess the performance of intra-frequency and inter-frequency cells. For an intra-frequency cell, the UE reports its measurement results when the neighboring cell's RSRP consistently exceeds that of the serving cell by a margin $U$ over a period $H_2$. In contrast, for an inter-frequency cell, direct RSRP comparison is not rigorous due to fading differences among signals of different frequencies. Instead, the UE reports when the inter-frequency cell's RSRP consistently surpasses a threshold $W$ over the period $H_2$.

    \item \textbf{Execute:} Upon receiving the report, the network initiates the handover process. If multiple neighboring cells are suitable for handover, the network selects the target cell based on the reported measurement information and predefined strategies, typically choosing the cell with the highest signal quality. The network then notifies the original cell, the UE, and the target cell to prepare for the handover. The UE subsequently establishes a connection with the target cell, and the original cell releases the connection. We assume that the handover execution can be completed within one time slot. Therefore, if a UE reports its measurements in time slot $n$, it is transferred to the target cell in time slot $n+1$.
    
\end{itemize}

\par Each cell configures its own handover parameters $Z$, $U$, and $W$. To enhance flexibility and achieve different purposes, we categorize the handovers into three types \cite{3gpp-ts38104}, i.e., \textbf{coverage-based intra-frequency handover (CAH)}, \textbf{coverage-based inter-frequency handover (CEH)}, and \textbf{priority-based inter-frequency handover (PEH)}, with each type adopting independent parameter values. A cell can employ the three handover types simultaneously. We elaborate on the three handover types below and provide a comparison in Table \ref{3type}.

\begin{table*}
\caption{Comparison of Three Cell Handover Types}\label{3type}
\centering
  \begin{tabular}{|m{1.8cm}|m{4.8cm}|m{4.8cm}|m{4.8cm}|}
\hline
\centering\textbf{Handover Type}&
\textbf{coverage-based intra-frequency}
&
\textbf{coverage-based inter-frequency}
&
\textbf{priority-based inter-frequency}
\\\hline
\centering\textbf{Illustration}

&\includegraphics[width=4.8cm,height=3.2cm]{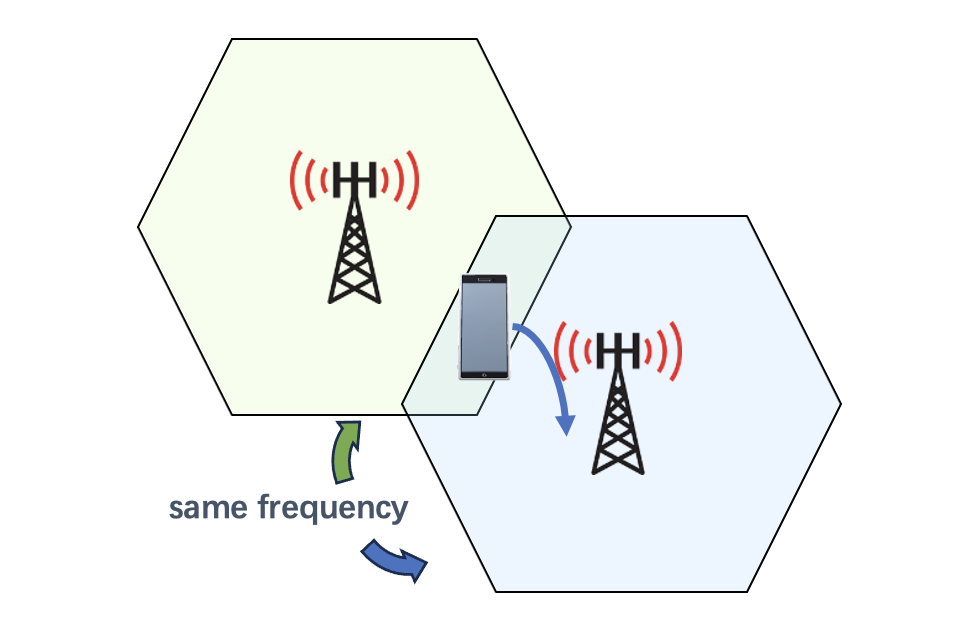}
&\includegraphics[width=4.8cm,height=3.2cm]{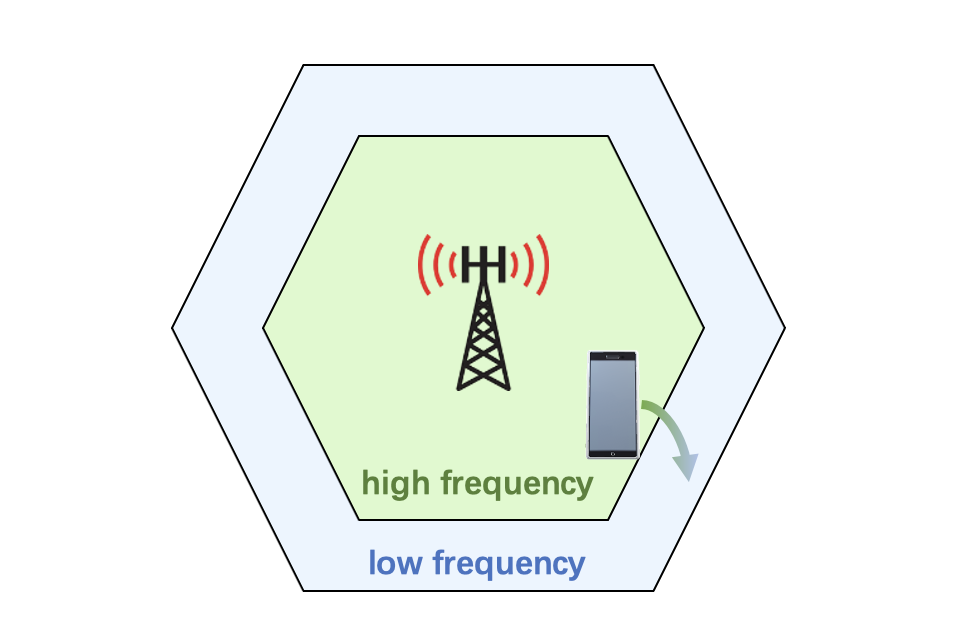}
&\includegraphics[width=4.8cm,height=3.2cm]{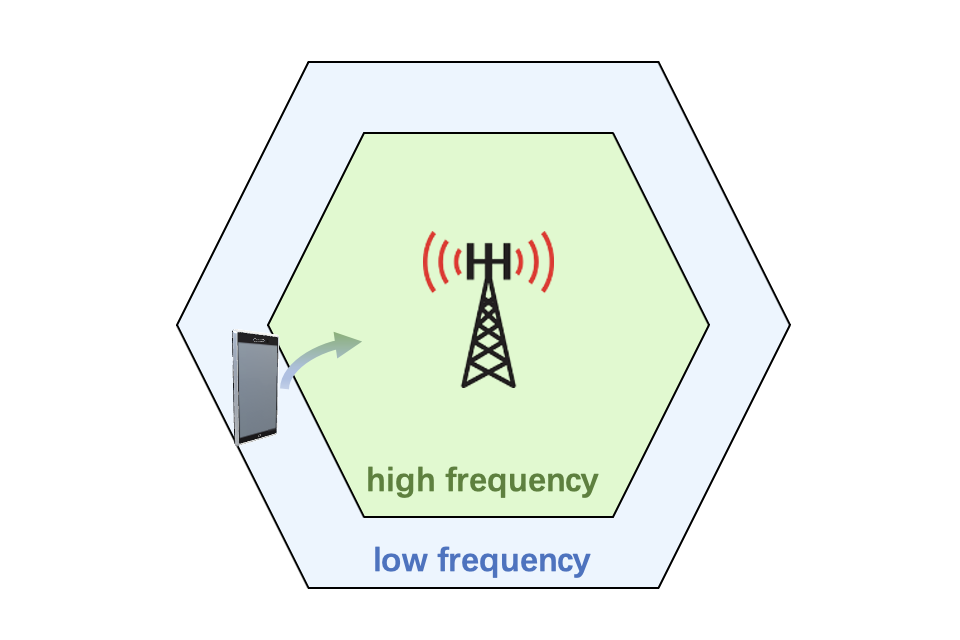}
\\  \hline
 \centering \textbf{Purpose} & maintain connection continuity &  maintain  connection continuity & ensure UE QoS \\
    \hline
  \centering  \textbf{Scenario} & UE located in the junction of two intra-frequency cells & UE moves to lower-frequency cell with  wider coverage & UE switches to higher-frequency cell with larger capacity  \\ 
    \hline
  \centering  \textbf{Direction}  & between intra-frequency cells  & high-frequency to low-frequency  & low-frequency to high-frequency\\
    \hline
  \centering  \textbf{Monitor Condition} & No trigger  & serving cell RSRP lower than $Z_m^{CE}$ for $H_1$ time slots  & serving cell RSRP lower than $Z_m^{PE}$ for $H_1$ time slots\\
    \hline
  \centering  \textbf{Report Condition} & neighboring cell RSRP exceeds that of serving cell by $U_m^{CA}$ for $H_2$ time slots& neighboring cell RSRP exceeds $W_m^{CE}$  for $H_2$ time slots & neighboring cell RSRP exceeds $W_m^{PE}$ for $H_2$ time slots \\
    \hline
  \end{tabular}
  
\end{table*}

\begin{itemize}
    \item \textbf{CAH:} Coverage-based intra-frequency handover. This type of handover is designed to ensure the continuity of connection for UEs between intra-frequency cells, which requires no trigger for the monitor step. The report condition of UE $k$ can be written as:
    \begin{align}
        G_{j,k,\tau }- G_{m,k,\tau }\geq U_m^{CA}, \forall \tau \in \{t,...,t+H_2 \},
    \end{align}
    where cell $m$ is the current serving cell and cell $j$ is an intra-frequency neighbouring cell.
    
    \item \textbf{CEH:} Coverage-based inter-frequency handover. In general, the coverage range of low-frequency cells is larger than that of high-frequency cells. When a UE moves to the edge of a high-frequency cell, it may switch to an adjacent low-frequency cell with a decent signal quality to ensure signal continuity. The trigger condition is
    \begin{align}
        G_{m,k,\tau }\leq Z_m^{CE}, \forall \tau \in \{t,...,t+H_1 \},
    \end{align}
    and the report condition is presented as
    \begin{align}
        G_{j,k,\tau }\geq W_m^{CE}, \forall \tau \in \{t,...,t+H_2 \}, F_j \leq F_m,
    \end{align}
    where cell $m$ is the serving cell of UE $k$, and cell $j$ is a neighbouring cell with lower frequency. 
    \item \textbf{PEH:} Priority-based inter-frequency handover. Due to narrower bandwidth, low-frequency cells have much lower throughput and rates compared to high-frequency cells. To ensure quality of service (QoS), the network migrates UEs from low-frequency cells to high-frequency cells. The trigger and the report conditions are defined as:
    \begin{gather}
        G_{m,k,\tau }\leq Z_m^{PE}, \forall \tau \in \{t,...,t+H_1 \}, \\
        G_{j,k,\tau }\geq W_m^{PE}, \forall \tau \in \{t,...,t+H_2 \}, F_j \geq F_m,
    \end{gather}
     where cell $m$ is the serving cell of UE $k$, and cell $j$ is a neighbouring cell with higher frequency. 
\end{itemize}

\par The 5G protocols \cite{3gpp-ts38331,3gpp-ts38133,3gpp-ts38214,3gpp-ts38215} specify that the cell handover parameters $Z$, $U$, and $W$ must be integers with their unit being \text{dB}. Specifically, $Z$ and $W$ are constrained to range from $-44$ to $-140$, while $U$ is set between $0$ and $96$. Thus we have the following constraints:
\begin{gather}
U_m^{CA} \in \{0,1,...,96 \},  \forall m,\\
    Z_m^{CE},W_m^{CE},Z_m^{PE}, W_m^{PE} \in \{-44,-45,...-140 \}, \forall m.
\end{gather}
Each UE executes the handover type that is reported first. After the handover, all previous measurement records are cleared, and the UE restarts monitoring.

\subsection{Load Balancing Problem Formulation}

\par We use $D_{k,t}$ to represent the size of UE $k$'s request at time slot $t$, and define the load $L_{m,t}$ of cell $m$ at time slot $t$ as the time it takes to finish all the requests of its serving UEs, which can be written as 
 \begin{align}
     L_{m,t}=\frac{1}{\sum_{k=1}^K \alpha_{m,k,t}}\cdot \sum_{k=1}^K\frac{D_{k,t}}{R_{m,k,t}}\cdot d_t,
 \end{align}
so the standard deviation $\Gamma_t$ of all cells' loads at time slot $t$ is calculated by the following equation 
\begin{align}
\Gamma_t = \sqrt{\frac{1}{M} \sum_{m=1}^{M} \left( L_{m,t} - \bar{L}_t \right)^2},
\end{align}
where $\bar{L}_t=\frac{1}{M} \sum_{i=1}^{M} L_{i,t}$ is the average load of all cells at time slot $t$. Denote handover parameters set as $\mathcal{U}=\{U_m^{CA}|\forall m \}$, $\mathcal{Z}=\{Z_m^{CE},Z_m^{PE}|\forall m \}$, and $\mathcal{W}=\{W_m^{CE},W_m^{PE}|\forall m \}$. Our objective is to achieve load balancing in 5G cellular networks by minimizing the cumulative standard deviation of cell loads across the entire operation period, with respect to the parameters in $\mathcal{U}$, $\mathcal{Z}$, and $\mathcal{W}$. This optimization problem can be formulated as: 
 \begin{subequations}\label{p1}\allowdisplaybreaks
  \begin{align}
   & \min_{\mathcal{U},\mathcal{Z},\mathcal{W}} \sum_{t=1}^T \Gamma_t, \\
   \mbox{s.t.  }
 &  \alpha_{m,k,t} \in \{0,1\},\forall m,k,t,\\
  & \sum_{m=1}^M \alpha_{m,k,t} = 1, \forall k,t,\\
  &U_m^{CA} \in \{0,1,...,96 \},  \forall m,\\
   & Z_m^{CE},W_m^{CE},Z_m^{PE}, W_m^{PE} \in \{-44,-45,...-140 \}, \forall m.
\end{align}
\end{subequations}

 \par This is an NP-hard combinatorial optimization problem with highly coupled variables and non-convex formulations. Due to the high density of cell deployment and their mutual influence in real scenarios, a large number of cells and parameters need to be considered jointly. The sheer scale and complexity of the variable space make it infeasible to solve the problem using precise methods within a limited time. Therefore, we propose a multi-agent-reinforcement-learning (MARL)-based decentralized handover parameter optimization scheme (MADEHO) for load balancing in Section \ref{marl}.  

\begin{figure*}[htbp] 
\centering 
\includegraphics[width=\textwidth]{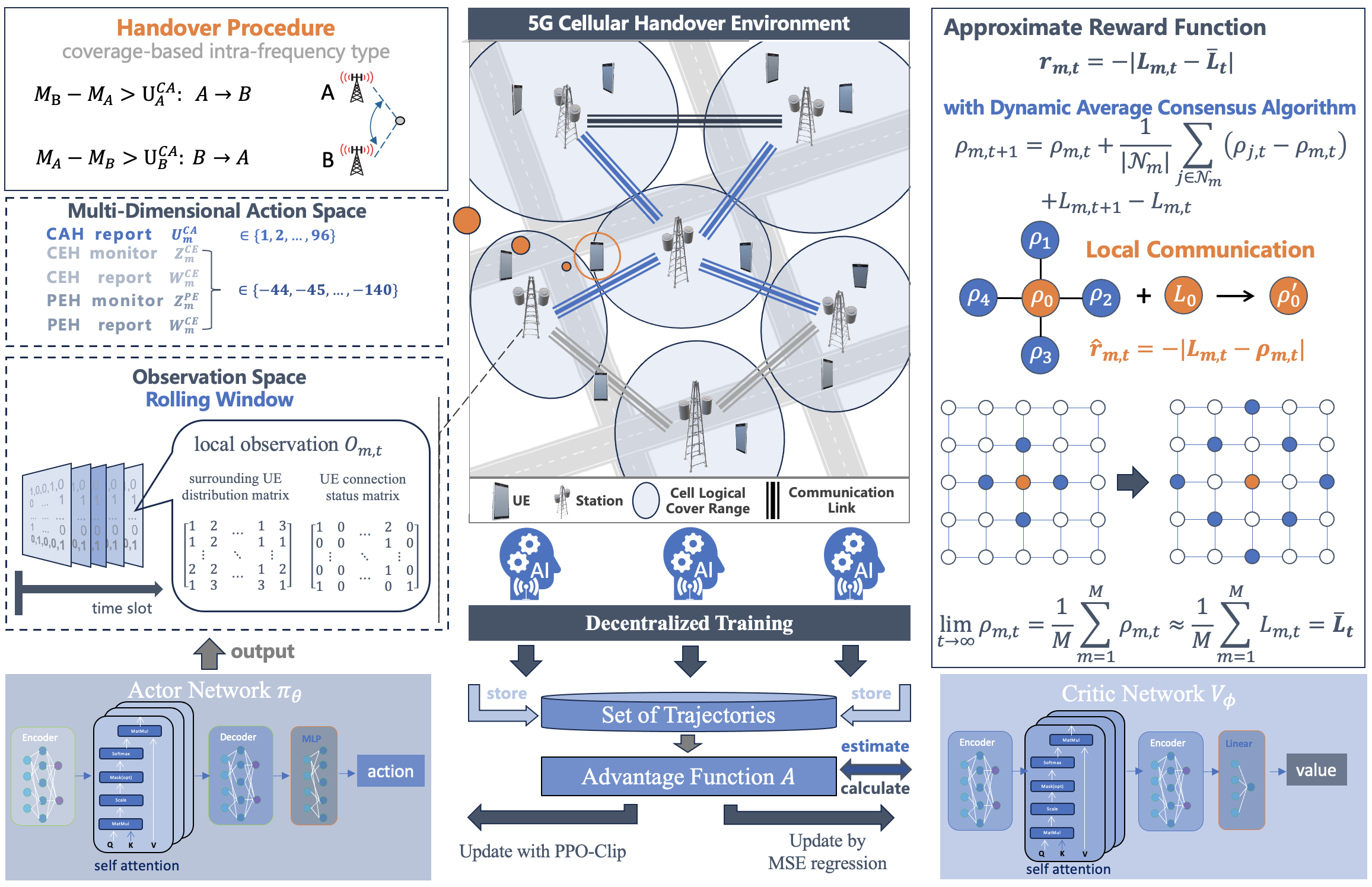}
\caption{Cells observe UE distribution and connection status to adapt handover parameters. They achieve decentralized training via local communication, using PPO to update policy and value networks for handling large action spaces.} 
\label{Fig.5GRL} 
\end{figure*}
\section{MARL-Based Decentralized Handover Parameter Optimization for Load Balancing}\label{marl}
\par In the 5G cellular network, we model each cell as an agent, and all agents collaboratively learn the optimal handover parameters using a Proximal Policy Optimization (PPO)-based MARL approach. Specifically, we first define the reinforcement learning environment. Subsequently, we design a decentralized training framework that relies on local communication to approximate global rewards using a dynamic average consensus algorithm. Finally, we introduce the policy optimization procedure to update the agents' policies. The main details of MARL environment definition and the process of policy optimization are demonstrated in Fig. \ref{Fig.5GRL}.

\subsection{Environment Definition}
\par The cell handover parameters determine the conditions under which UEs are transferred between cells, thereby affecting the logical boundaries and coverage areas of cells. A cell's logical boundary is influenced by both its own handover parameters and those of its neighboring cells. Consequently, cells ought to observe the load distribution and collaboratively adjust their handover parameters to achieve load balancing. This process can be viewed as a decentralized Partially Observable Markov Decision Process (Dec-POMDP) \cite{pomdp}. We define the Dec-POMDP by the tuple $ (S, \mathcal{A}_{\times}, \mathcal{P}, \mathcal{R}_{\times}, \Omega_\times, \gamma)$:

\begin{itemize}
    \item \textit{S} is the state space, which includes the distribution of all UEs and their association relationship with cells.
    \item $\mathcal{A}_{\times} $ is the joint action space, where each agent selects an action from its local action space. In this environment, each agent uses a multi-dimensional action space to select values for five discrete handover parameters $U^{CA}_m, Z^{CE}_m, Z^{PE}_m, W^{CE}_m, W^{PE}_m$ mentioned in Section \ref{Handover Procedure}. The multi-dimensional action space for agent $m$ at time slot $t$ can be written as 
    \begin{equation}
        a_{m,t} = \{0, 1, 2, \dots, 96\} \times \{-44, -45, \dots, -140\}^4
    \end{equation}
    
    \item $\mathcal{P}$ is the transition probability function (or transition
matrix), which is generated by the random state of the channels and handover processes.
    \item $\mathcal{R}_{\times}$ is the local reward function designed as negative of the absolute difference between the cell's current load and the average load $\bar{L}_t$ of all cells, which can be written as 
    \begin{equation} \label{originalreward}
        r_{m,t}=-\lvert L_{m,t}-\bar{L}_t \rvert.
    \end{equation}

\item $\Omega_\times$ is the joint observation space. Each agent observes the surrounding UE distribution together with their connection status. We divide the map into several grids, each grid having a side length of $\nu$. Each cell can observe a square area centered on its own grid, with each side containing $ 2\kappa + 1 $ grids. The surrounding UE distribution is recorded in a $ (2\kappa + 1 ) \times (2\kappa + 1 ) $ matrix $\mbox{UED}_{m,t}$, which describes the number of UEs in each grid of the map. The connection status of UEs are written in a $ (2\kappa + 1 ) \times (2\kappa + 1 ) $ matrix $\mbox{CSM}_{m,t}$ to show how many UEs in each grid are connected to the cell. Considering the fluctuations caused by signal degradation and the time it takes to switch, each agent is allowed to observe the surrounding UE distribution, and their connection status over the preceding $\eta$ time slots with a rolling window method. The observation of agent $m$ at time slot $t$ is:
\begin{equation}
\begin{split}
     o_{m,t}=\{\mbox{UED}_{t-\eta+1},\mbox{CSM}_{t-\eta+1},...,\mbox{UED}_t, \mbox{CSM}_t\}.
\end{split}
\end{equation}

\item $\gamma \in [0,1]$ is the discount factor, which determines the
importance of future rewards. Then the actual return \( g_{m,t} \) after action $a_{m,t}$ can be formulated as
\begin{equation}
    g_{m,t}  = \sum_{\tau=t}^{T} \gamma^{\tau-t} r_{m,\tau},
\end{equation}
meaning the accumulated reward of $m$ after action $a_{m,t}$.
\end{itemize}

\subsection{Reward Approximation Based on Local Communication}

 The local reward function (\ref{originalreward}) requires real-time global information to compute the average load of all cells, which is impractical in real-world scenarios due to the high communication overhead and complexity. This challenge is further compounded by the fact that global information exchange is a significant bottleneck in MARL, as it hampers the scalability and efficiency of decentralized training. To address this issue, as depicted in Fig. \ref{Fig.5GRL}, we employ local communication rather than global communication to facilitate decentralized training. Although each cell cannot directly calculate the average load $\bar{L}_t$ of all cells when using local communication, we leverage the dynamic average consensus algorithm \cite{dynamicaverageconsensus} to approximate it effectively. This approach not only reduces communication costs but also enhances the robustness and adaptability of the system in dynamic environments.

 \par We model the cellular network as a graph $G$, where each cell is a vertex. When the distance of two cells does not exceed $\chi$, they are considered as neighbors and within each other's local communication range. Each cell maintains an estimate of the average load $\rho$ and, at each time slot, it exchanges its estimate of the average load with its neighbors and subsequently update it using the following formulation:

    \begin{equation}\label{update}
         \rho_{m,t+1}=\rho_{m,t}+L_{m,t+1}-L_{m,t}+\sum_{j \in \textrm{N}_m}\frac{1}{\lvert\textrm{N}_m \lvert}(\rho_{j,t}-\rho_{m,t}),
    \end{equation}
    where $\rho_{m,t}$ is cell $m$'s estimation of the average load at time slot $t$, $\textrm{N}_m$ is the set containing neighbors of cell $m$, $\lvert\textrm{N}_m \lvert$ is the number of its neighbors, and $\rho_{m,1}=L_{m,1}$. Then we can approximate the local reward function as
    \begin{equation} \label{approreward}
        r_{m,t} \approx -\lvert L_{m,t}-\rho_{m,t} \rvert.
    \end{equation}

To ensure the feasibility of approximating the average load, we analyze the error of the dynamic average consensus algorithm and prove that under the assumption of a limited load change rate, its error has a constant upper bound. 

\begin{assumption} \label{assumption}
It is assumed that $G$ is a connected and irreducible graph. For cell $m$ at each time slot $t$, the absolute values of the real time load $L_{m,t}$ and its change rate $\dot{L}_{m,t}=L_{m,t}-L_{m,t-1}$ are both bounded. Specifically, there exists a constant $\zeta > 0$ such that
\begin{gather}
    \sup|L_{m,t}| \leq \zeta < \infty, \forall m, t,\\
\sup|\dot{L}_{m,t}| \leq \zeta < \infty, \forall m, t.
\end{gather}
\end{assumption}

\begin{theorem}
Under Assumption \ref{assumption}, the dynamic average consensus algorithm ensures that the error between the estimate $\rho_{m,t}$ and the average load $\bar{L}_t$ is bounded by a constant, which can be specified as:
\begin{equation}
  |\rho_{m,t} - \bar{L}_t| \leq \frac{3-\lambda}{1-\lambda}\zeta, \forall m, t,
\end{equation}
where $\lambda$ is the spectral radius of the consensus matrix $\omega$ associated with the graph $G$. In addition, when $t \to \infty$, the error bound is lower, which is:
\begin{equation}
\lim_{t \to \infty} |\rho_{m,t} - \bar{L}_t| \leq \frac{2\zeta}{1-\lambda}, \forall m.
\end{equation}
\end{theorem}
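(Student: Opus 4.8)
The plan is to recast the dynamic average consensus update (\ref{update}) as a linear recursion for the consensus error, observe that this error stays confined to the ``disagreement'' subspace on which the consensus matrix acts as a contraction, and then bound the accumulated effect of the load fluctuations by a geometric series.

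First I would stack the quantities into vectors $\rho_t=(\rho_{1,t},\dots,\rho_{M,t})^\top$ and $L_t=(L_{1,t},\dots,L_{M,t})^\top$, so that (\ref{update}) reads $\rho_{t+1}=\omega\rho_t+(L_{t+1}-L_t)$, where $\omega$ is the (doubly stochastic) consensus matrix of $G$, so $\omega\mathbf 1=\mathbf 1$ and $\mathbf 1^\top\omega=\mathbf 1^\top$. Introducing the error vector $e_t:=\rho_t-\bar L_t\mathbf 1$ and subtracting $\bar L_{t+1}\mathbf 1$ from both sides, the identities $\omega\mathbf 1=\mathbf 1$ and $\bar L_{t+1}-\bar L_t=\tfrac1M\mathbf 1^\top(L_{t+1}-L_t)$ give the clean recursion
\begin{equation*}
 e_{t+1}=\omega\,e_t+\delta_{t+1},\qquad \delta_{t+1}:=(L_{t+1}-L_t)-(\bar L_{t+1}-\bar L_t)\mathbf 1 .
\end{equation*}
Two structural observations then carry the proof. (i) Since $\mathbf 1^\top\delta_{t+1}=0$ and the initialization $\rho_1=L_1$ gives $\mathbf 1^\top e_1=0$, induction shows $\mathbf 1^\top e_t=0$ for all $t$: the error never leaves the disagreement subspace $V_0:=\{v:\mathbf 1^\top v=0\}$, which is invariant under $\omega$. (ii) Because $G$ is connected and irreducible (Assumption \ref{assumption}), $\omega$ restricted to $V_0$ is a strict contraction, $\|\omega v\|\le\lambda\|v\|$ for $v\in V_0$, with $\lambda<1$ the spectral radius of $\omega$ on $V_0$.

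Next I would use Assumption \ref{assumption} to bound the forcing term and the initial error. From $|\dot L_{m,t}|\le\zeta$ entrywise, the average increment $\bar L_{t+1}-\bar L_t$ is also bounded by $\zeta$, so every entry of $\delta_{t+1}$ has magnitude at most $2\zeta$ and $\|\delta_t\|\le 2\zeta$ for all $t$; likewise, as loads are non-negative and bounded by $\zeta$, each entry of $e_1=L_1-\bar L_1\mathbf 1$ has magnitude at most $\zeta$, so $\|e_1\|\le\zeta$. Unrolling the recursion,
\begin{equation*}
 e_t=\omega^{t-1}e_1+\sum_{s=2}^{t}\omega^{\,t-s}\delta_s ,
\end{equation*}
and since every summand lies in $V_0$, applying the contraction term by term yields
\begin{equation*}
 \|e_t\|\le\lambda^{t-1}\|e_1\|+\sum_{s=2}^{t}\lambda^{\,t-s}\|\delta_s\|\le\lambda^{t-1}\zeta+2\zeta\sum_{k=0}^{t-2}\lambda^{k}\le\zeta+\frac{2\zeta}{1-\lambda}=\frac{3-\lambda}{1-\lambda}\zeta .
\end{equation*}
As $|\rho_{m,t}-\bar L_t|\le\|e_t\|$ for each $m$, this gives the uniform bound; and since the transient $\lambda^{t-1}\|e_1\|\to0$ while $\sum_{k=0}^{t-2}\lambda^{k}\to\tfrac1{1-\lambda}$, one also gets $\limsup_{t\to\infty}|\rho_{m,t}-\bar L_t|\le\tfrac{2\zeta}{1-\lambda}$.

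I expect the delicate point to be item (ii): justifying rigorously that $\omega$ contracts the disagreement subspace at rate $\lambda$ and doing so in a norm in which $\|\delta_t\|$ and $\|e_1\|$ remain bounded by constant multiples of $\zeta$ without incurring a factor of $\sqrt M$. This requires invoking primitivity of $\omega$ (via Perron--Frobenius, using connectedness and irreducibility from Assumption \ref{assumption}) and carefully pairing it with the entrywise load bounds; everything else — the error recursion, the $2\zeta$ disturbance bound, and the geometric summation that cleanly separates the decaying initial disagreement from the persistently injected load fluctuations — is then routine.
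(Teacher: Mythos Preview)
Your proposal is essentially the same as the paper's proof: both derive the vector error recursion $e_{t+1}=\omega e_t+\delta_{t+1}$, bound $\|\delta_{t+1}\|\le 2\zeta$ and $\|e_1\|\le\zeta$ from Assumption~\ref{assumption}, invoke Perron--Frobenius to obtain a contraction rate $\lambda<1$ on the disagreement subspace, and then sum the resulting geometric series to reach $\frac{3-\lambda}{1-\lambda}\zeta$ and the asymptotic $\frac{2\zeta}{1-\lambda}$. The only minor discrepancies are that the paper works explicitly in the $\ell_\infty$ norm and simply asserts the contraction inequality, whereas you (correctly) flag the norm-matching step as the delicate point, and that you assume $\omega$ is doubly stochastic while the paper's $\omega_{m,j}=\tfrac{1}{|\mathrm N_m|}$ is only row stochastic---a detail both arguments gloss over when invoking invariance of the error subspace.
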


\begin{proof}
Please see the Appendix.
\end{proof}

\subsection{Decentralized Policy Optimization}

The action space of each agent is large due to the wide ranges of the handover parameters. To avoid directly calculating the Q-value for each action, we utilize PPO \cite{ppo} as the policy optimization algorithm, which directly optimizes the policy distribution. In this environment, each agent $m$ trains its own policy network $\pi$ parameterized by $\theta_m$ and value network $V$  parameterized by $\phi_m$ with PPO. The training process is decentralized, since each agent only depends on its local observation and local reward. The inputs to both the policy network $\pi_{\theta_m}$ and the value network $V_{\phi_m}$ are the agent's observation $o_{m,t}$. The policy network $\pi_{\theta_m}$ selects actions to maximize cumulative rewards, while the value network $V_{\phi_m}$ estimates the expected value of a certain state or observation. 
\par  We denote the parameters of $m$ after $i$ episodes of update as $\theta^i_m$ and $\phi^i_m$. PPO updates the policy network by maximizing a clipped advantage function. The advantage function $A_{m,i,t}$ measures the additional benefit for agent $m$ of taking a specific action $a_{m,t}$ in a particular state $o_{m,t}$ compared to the average situation estimated by $V_{\phi^i_m}$, which is formulated as: 
\begin{equation}
    A_{m,i,t} = \sum_{l=t}^{T} (\gamma \xi)^{l-t} \delta_{m,i,t}
\end{equation}
 where $\delta_{m,i,t} = r_{m,t} + \gamma V_{\phi^i_m} (o_{m,t+1}) - V_{\phi^i_m} (o_{m,t})$ is the temporal difference (TD) error at time slot $t$ and $\xi$ is a GAE parameter that controls the trade-off between bias and variance. So the optimization objective $L^{policy}_{m,i,t}$ is 
 \begin{equation}
   L^{policy}_{m,i,t} =\min \left( \frac{\pi_{\theta_m}(a_{m,t}|o_{m,t})}{\pi_{\theta^i_m}(a_{m,t}|o_{m,t})}  A_{m,i,t}, g(\epsilon, A_{m,i,t}) \right),
\end{equation}
 where
\begin{equation}
    g(\epsilon, A) =
\begin{cases}
(1 + \epsilon) A & A \geq 0 \\
(1 - \epsilon) A & A < 0
\end{cases}
\end{equation}
and \( \epsilon \) is a hyper-parameter that controls the range of allowed policy updates. The advantage function is clipped in $L^{policy}_{m,i,t}$ to prevent large deviations from the current policy. Since the policy network aims to maximize the clipped advantage function, the policy parameters $\theta^i_m $ are updated by taking multiple steps of mini-batch Gradient Ascent \cite{sgd} as follows:
\begin{equation}
    \theta^{i+1}_m = \theta^i_m + \alpha_\theta \nabla_{\theta_m} L^{policy}_{m,i,t}.
\end{equation}

\par The value network is updated by minimizing the mean squared error between the predicted value $ V_{\phi^i_m}(o_{m,t}) $ and the actual return $g_{m,t}$. The loss function for the value network is 
\begin{equation}
   L^{value}_{m,i,t} = \frac{1}{2} \mathbb{E} \left[ \left( V_{\phi^i_m}(o_{m,t}) - g_{m,t} \right)^2 \right],
\end{equation}
while the value parameters $\phi_m$ are updated by minimizing the loss function $L^{value}_{m,i,t}$ using gradient descent:
\begin{equation}
    \phi^{i+1}_m = \phi^{i}_m - \alpha_\phi \nabla_{\phi_m} L^{value}_{m,i,t}.
\end{equation}
In order to elaborate on the MADEHO algorithm, we present the details of the training process in Algorithm \ref{algorithm1}. 

\begin{algorithm}
\caption{MARL-based Decentralized Handover Parameter Optimization (MADEHO)}\label{algorithm1}
\begin{algorithmic}[1]
\STATE \textbf{INPUT}: Initial parameters $\theta^0$, $\phi^0$, learning rates $\alpha_\theta$, $\alpha_\phi$
\STATE \textbf{OUTPUT}: Optimized policy parameter $\theta^*_m$ continuously
\STATE $\theta^0_m=\theta^0, \phi^0_m=\phi^0, \forall m$ 
\FOR{each episode}
\STATE Set cell locations and cell frequencies
\STATE Generate UE trajectories
\STATE Create a new empty set $\mathbf{TS}_m$ for each agent
\FOR{$t=1$ to $T$}
\FOR{each agent $m$}
\IF{unterminated}
    \STATE Get observation $o_{m,t}$ 
    \STATE Select action $a_{m,t}$ using $\pi_{\theta^i_m}$
    \IF{ $t==T$}
    \STATE terminate
    \ENDIF
    \STATE Get reward $r_{m,t}$ and new observation $o_{m,t+1}$ 
    \STATE Store $\left( o_{m,t}, a_{m,t}, r_{m,t}, o_{m,t+1} \right)$ in $\mathbf{TS}_m$
    \ENDIF
    \ENDFOR
    \ENDFOR
    \FOR{each agent $m$}
        \STATE Update parameters of policy network and value network using trajectories in $\mathbf{TS}_m$ :
        \[
        \theta^*_m \leftarrow \theta^i_m + \alpha_\theta \nabla_{\theta_m}  L^{policy}_{m,i,\mathbf{TS}_m}
        \]
        \[
        \phi^*_m \leftarrow \phi^i_m - \alpha_\phi \nabla_{\phi_m} L^{value}_{m,i,\mathbf{TS}_m}
        \]
    \ENDFOR
    \ENDFOR
\end{algorithmic}
\end{algorithm}

\par To enhance the ability to perceive interference and evaluate link capacity, we introduce a self-attention mechanism \cite{attention} into the policy network and the value network. The attention layer takes in encoded observation embeddings $E$ and trains ego query $W_Q$, collective key $W_K$ and collective value $W_V$. We denote $d_k$ as the embedding dimension and $h(\cdot)$ as the activate function. Thus the weighted attention is obtained by:
\begin{equation}
\text{Attention} = \text{softmax}\left(\frac{(W_Q E) (W_K E)^T}{\sqrt{d_k}}\right)h(W_V E).
\end{equation}

\par The concurrent nature of three types of handover events and frequent interactions among multiple cells and multiple UEs result in an extremely complex multi-agent environment. To clearly illustrate the implementation of distributed optimization in this environment, we present the optimization process in the form of a flowchart as shown in Fig. \ref{Fig.flowchart}. At the beginning of each time slot, each cell observes its surrounding environment and uses its own actor network to set the handover parameters. Next, all UEs follow the handover process. Specifically, if the monitor step of a UE is triggered, it continuously measures the RSRP of nearby cells and check whether the \textbf{CAH}, \textbf{CEH}, and \textbf{PEH} conditions are met. If any condition is satisfied, a handover is executed. At the end of each time slot, each cell stores its observation data and calculates the local reward with Equation (\ref{approreward}). Once an episode is terminated, the cells update their own Actor and Critic parameters using the MADEHO algorithm. When the training reaches the predefined number of episodes, the Actor networks of all cells are output.

\begin{figure}[!htb] 
\centering 
\includegraphics[width=0.5\textwidth]{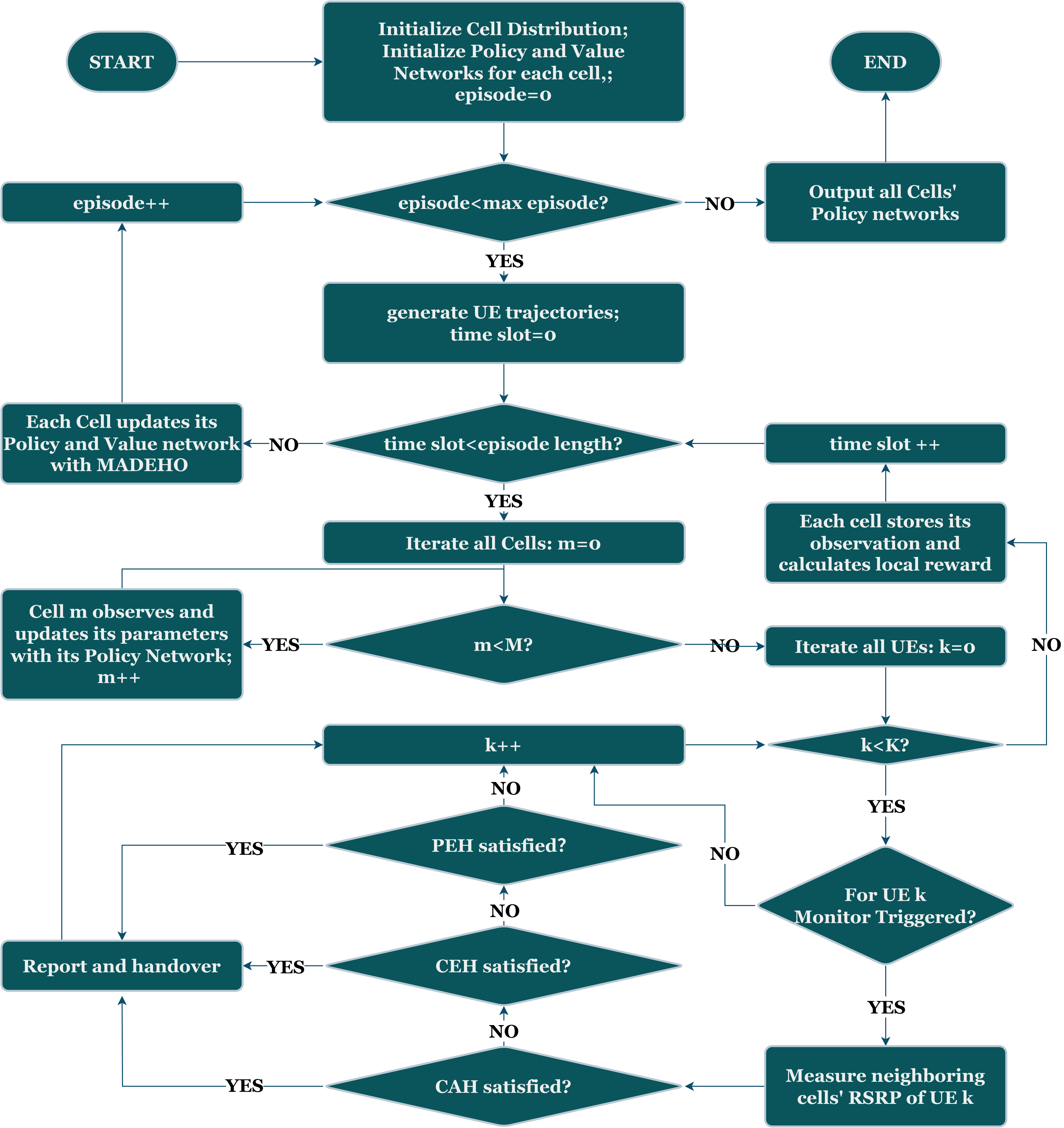} 
\caption{The complete process of agent-environment interaction and policy optimization in the form of a flowchart.  } 
\label{Fig.flowchart} 
\end{figure}

\section{Simulation Results} \label{SimulationResults}
In this section, we provide simulation results to demonstrate the effectiveness of our proposed scheme. 
We begin by introducing the simulation setup, and we investigate the factors that influence the difficulty of load balancing. We then compare our scheme with three other schemes on load balancing and other performance metrics. Finally, we analyze the impact of BS frequency assignment on load balancing.

\begin{figure}[htbp] 
\centering 
\includegraphics[width=0.48\textwidth]{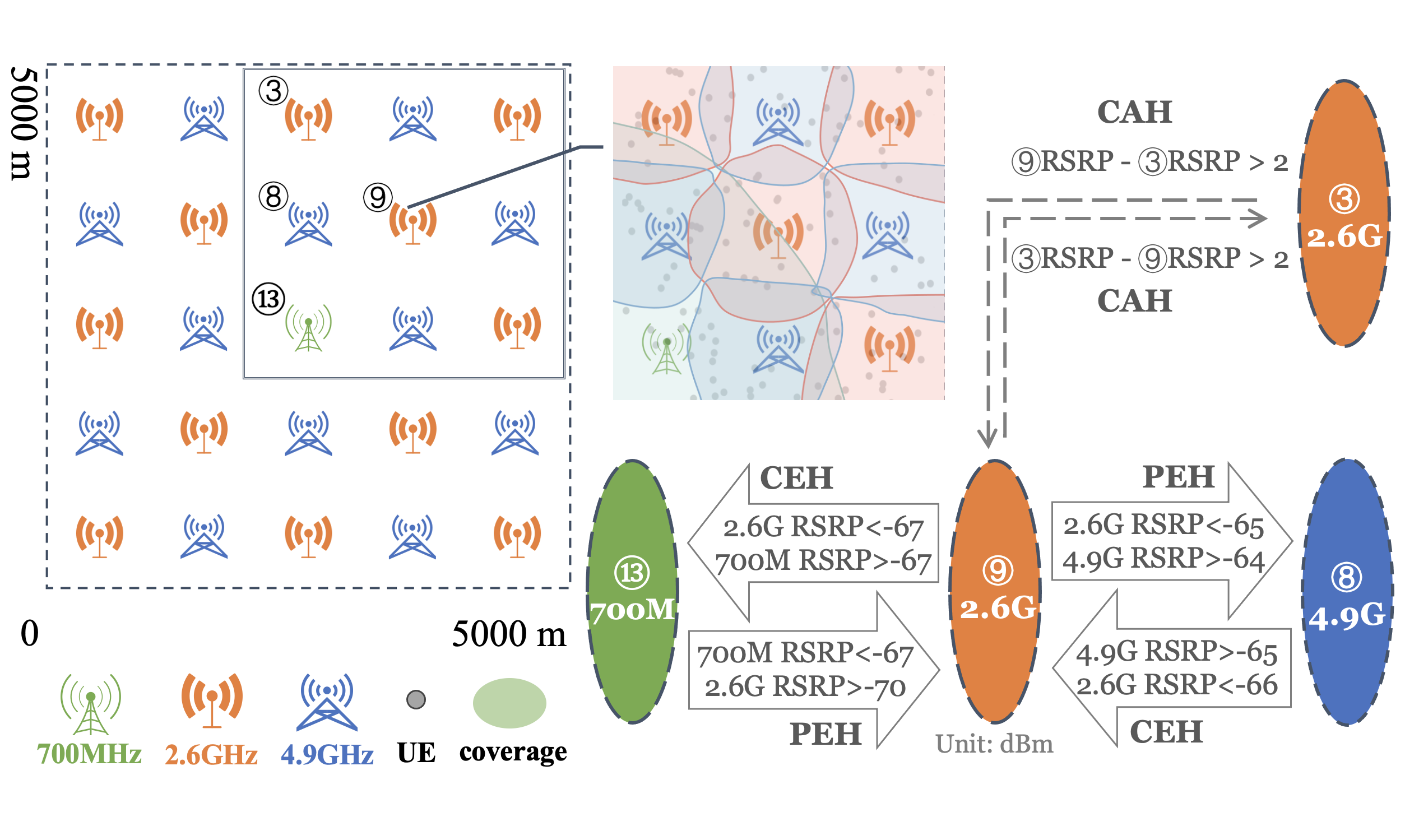} 
\caption{The representative case shows the simulation setup and part of the optimized handover parameters with our proposed scheme.} 
\label{Fig.case} 
\end{figure}
\subsection{Simulation Setup}
We consider a $\SI{5000}{\meter} \times \SI{5000}{\meter}$ area divided into a $5 \times 5$ grid of squares where $500$ moving UEs require BS connection. As shown in Fig. \ref{Fig.case}, each square has a BS deployed at its center, providing a cell with a specific frequency band. According to  \cite{3gpp-ts38104}, in 5G Frequency Range 1 (FR1), the three main frequency bands in use are 700 MHz, 2.6 GHz, and 4.9 GHz, with bandwidths of 10 MHz, 40 MHz, and 50 MHz, respectively. The frequency band of each cell is illustrated in Fig. \ref{Fig.case}.

In real-world situations, user movement is typically purposeful while also influenced by random factors. To simulate the characteristic, we use Ornstein-Uhlenbeck (OU) process \cite{OU} to generate the trajectories of UEs, which is written as:
\begin{align}
    l_{k,t+1} = l_{k,t} + \iota (\mu - l_{k,t} )d_t + \sigma \sqrt{d_t} \zeta,
\end{align}
where $\iota, \mu, \sigma, \zeta $  are all UE-related parameters representing mean speed, mean determined position, volatility, and random noise that follows a standard normal distribution. At the beginning of each period, we generate each UE's initial position $l_{k,1}$ and mean determined position $\mu_k$ following a normal distribution $(\mu_x, {\sigma_x}^2)$ and $(\mu_y, {\sigma_y}^2)$ in the $x$- and $y$-directions, respectively. $ \mu_x, \mu_y \in [500, 4500] $ and $ \sigma_x, \sigma_y \in [0.5, 5] $. The mean reversion speed $ \iota $ is set as $ \iota = \frac{1}{T} $ to ensure UEs reach their destination at the end of the period. The volatility value $ \zeta $ is set to 0.1. For each period, we adjust the values of $ \mu_x, \sigma_x, \mu_y, \sigma_y $ to simulate different UE trajectories.
\par In the training process, our proposed scheme is updated by $10000$ episodes, each containing $100$ time slots. Then we test our scheme over another $10,000$ episodes. The detailed parameter settings are provided in Table \ref{table1}. Fig. \ref{Fig.case} shows a representative case where the handover parameters are optimized by our scheme. In this case, $\mu_x$ and $\mu_y$ are set as $2500$, while $\sigma_x$ and $\sigma_y$ are assigned to be $4$. Take Cell $9$ for instance, the real-time surrounding UE distribution and its coverage are depicted. The coverages of different cells may overlap since the handover parameters only determine when a UE leaves and it may shortly switch back. The coverage of Cell $13$ is larger than those of the other cells since low-frequency cell signals attenuate more slowly. The handover parameters of Cell $9$ for CAH, CEH, and PEH are separately listed. Its PEH parameters are higher than CEH parameters, which reveals that migrating to a higher-frequency cell requires more caution due to the disturbance caused by rapid signal attenuation.

\begin{table}[h]
\centering
\caption{Simulation Parameters}\label{table1}
\begin{tabular}{|l|l|}
\hline
\textbf{Parameter}                                   & \textbf{Value}                                      \\ \hline
number of cells & 25
\\ \hline
number of UEs & 500         
\\ \hline
available frequency (\SI{}{\giga\hertz}) & 0.7, 2.6, 4.9
\\ \hline
corresponding bandwidth (\SI{}{\mega\hertz}) & 10, 40, 50
\\ \hline
length of a time slot $d_t$  & \SI{200}{\milli\second}  
\\ \hline
number of time slot $N$ & 100
\\ \hline
BS transmit power $P$ &  \SI{45}{\text{dBm}}
\\ \hline
fast fading factor $h$ & $\sim \text{Rayleigh}(\sigma = 2)$
\\ \hline
sender antenna gain $g_{tx}$ & \SI{10}{\text{dB}}
\\ \hline
receiver antenna gain $g_{rx}$ & \SI{1}{\text{dB}}
\\ \hline
time to trigger monitor $H_1$ & \SI{5}{\text{ time slot}}
\\ \hline
time to trigger report $H_2$ & \SI{3}{\text{ time slot}}
\\ \hline
AWGN $\sigma^2$ & \SI{-110}{\text{dBm}}
\\ \hline
UE request $D_{k,n}$ & \SI{1}{\mega\byte}
\\ \hline
discount factor $\gamma$ & 0.99
\\ \hline
neighbour distance threshold $\chi$ & \SI{2000}{\meter}
\\ \hline
rolling window length $\eta$ & \SI{5}{\text{ time slot}}
\\ \hline
observation grid length $\nu$ & \SI{200}{\meter}
\\ \hline
grid number per side $2\kappa+1$ & 15
\\ \hline
learning rate $\alpha_\theta, \alpha_\phi$ & 0.001, 0.001
\\ \hline
GAE parameter $\xi$ &0.95
\\ \hline
GAE hyper-parameter $\epsilon$ & 0.1
\\ \hline
mean reversion speed $ \iota $ & 0.01
\\ \hline
volatility value $ \zeta $ & 0.1
\\ \hline
mean of UE distribution $\mu_x, \mu_y$ &  random in [500, 4500]
\\ \hline
std of UE distribution $ \sigma_x, \sigma_y$ & random in [0.5, 5]
\\ \hline
number of training episodes & 10000
\\ \hline
number of testing episodes & 10000
\\ \hline
\end{tabular}
\end{table}
\subsection{Load Balancing Performance Comparison} \label{CompareScheme} 
To accurately assess and control the difficulty of load balancing, in this part we first introduce two influencing factors: UE distribution balancing degree and UE mobility speed. Next, we evaluate the performance of our scheme and baselines in achieving load balancing.
\subsubsection{UE distribution balancing degree}
A cell's load is related to the surrounding UE density since UEs can only connect to nearby cells, so the UE distribution directly affects load balancing difficulty. We adopt the average standard deviation of the number of UEs in each grid across time slots to measure the UE distribution balancing degree. The number of grids is $\psi=\frac{5000}{\nu}$. We denote ${CT}_{i,t}$ as the number of UEs in grid $i$ at time slot $t$. For simplicity, we abbreviate the standard deviation as `std'. So the UE distribution std is:
\begin{equation}
\begin{split}
    & \textbf{UE distribution std} \\
    & = \frac{1}{T}\sum_{t=1}^T \sqrt{\frac{1}{\psi} \sum_{i=1}^{\psi} \left( {CT}_{i,t} - \frac{1}{\psi} \sum_{j=1}^{\psi} {CT}_{j,t} \right)^2}.
\end{split}
\end{equation}

\subsubsection{UE mobility speed} \label{mobilityspeed}
Since the strength of the received signal is positively correlated with the distance between BSs and UEs, cell handover can be frequently triggered when the UEs move fast, allowing cells to balance their loads. On the other hand, when the location of a UE changes too quickly, it may lose the original signals and fail to conduct a successful handover. In this part, we define \textbf{Average UE speed} as the mean of all UEs' instantaneous speeds, which is expressed as:
\begin{equation}
\begin{split}
   & \textbf{average UE speed}\\
   & =\frac{1}{M}\sum_{k=1}^K\frac{1}{N}\sum_{n=1}^N \left(\frac{||l_{k,n+1}-l_{k,n} ||}{d_t}\right).
\end{split}
\end{equation}
\par Fig. \ref{Fig.3D} is a 3D plot that reflects the impact of UE mobility speed and UE distribution balancing degree on network load balancing using our scheme. Given an average UE speed, the cell load std rises with the growth of UE distribution std, which indicates that the more balanced the UE distribution is, the easier it is to achieve load balancing. Given the UE distribution std, the cell load std first decreases and then increases with the average UE speed going up. This is because moderately increasing UE mobility can raise the frequency of handover, thereby promoting load balance; however, when the UEs move too fast, the signal quality becomes unstable, which hinders handovers and load balancing.

\begin{figure}[htbp] 
\centering 
\includegraphics[width=0.45\textwidth]{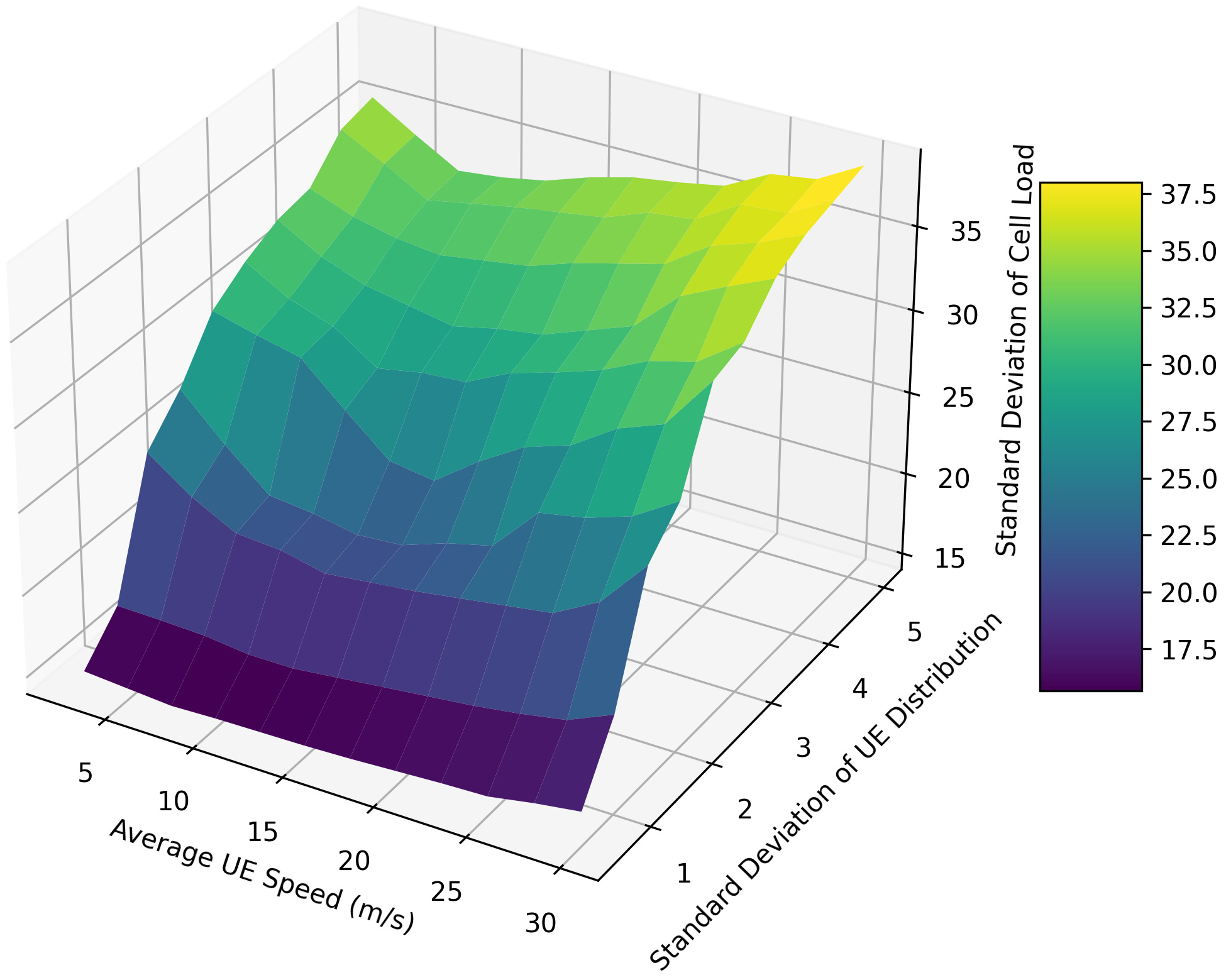}
\caption{Cell load std vs. average UE speed and UE distribution std.} 
\label{Fig.3D}
\end{figure}
\vspace{-10pt}

In the simulation experiments, we compare the following four handover parameter optimization schemes:
\begin{itemize}
\item \textbf{MADEHO (Ours):} an MARL-Based scheme for decentralized handover parameter optimization. 
    
\item \textbf{GraphHO \cite{GraphHO}:} A graph-based scheme to optimize handover parameters, which leverages a graph-based model with primal-dual GCNs and contextual bandits.

\item \textbf{DDSO \cite{DDSO}:}  A data-driven scheme for optimizing handover parameters, which combines SHAP-based sensitivity analysis for efficient data sampling with machine learning models to capture parameter-KPI relationships.

\item \textbf{HCPSO \cite{HCPSO}:} A handover control parameter self-optim-ization scheme using real-time data on user mobility, channel, and system parameters by applying analytical models based on these measurements.
\end{itemize}

\par Fig. \ref{Fig.UEDistribution} shows how the cell load std changes with the increase of the UE distribution std. The average UE speed is limited in $[24, 26] \,\text{m/s}$. The lower the cell load std, the more balanced the network. Our scheme decreases the cell load std by $4.16\%$, $14.43\%$, and $15.7\%$ compared with others, respectively.

\begin{figure}[htbp] 
\centering 
\includegraphics[width=0.45\textwidth]{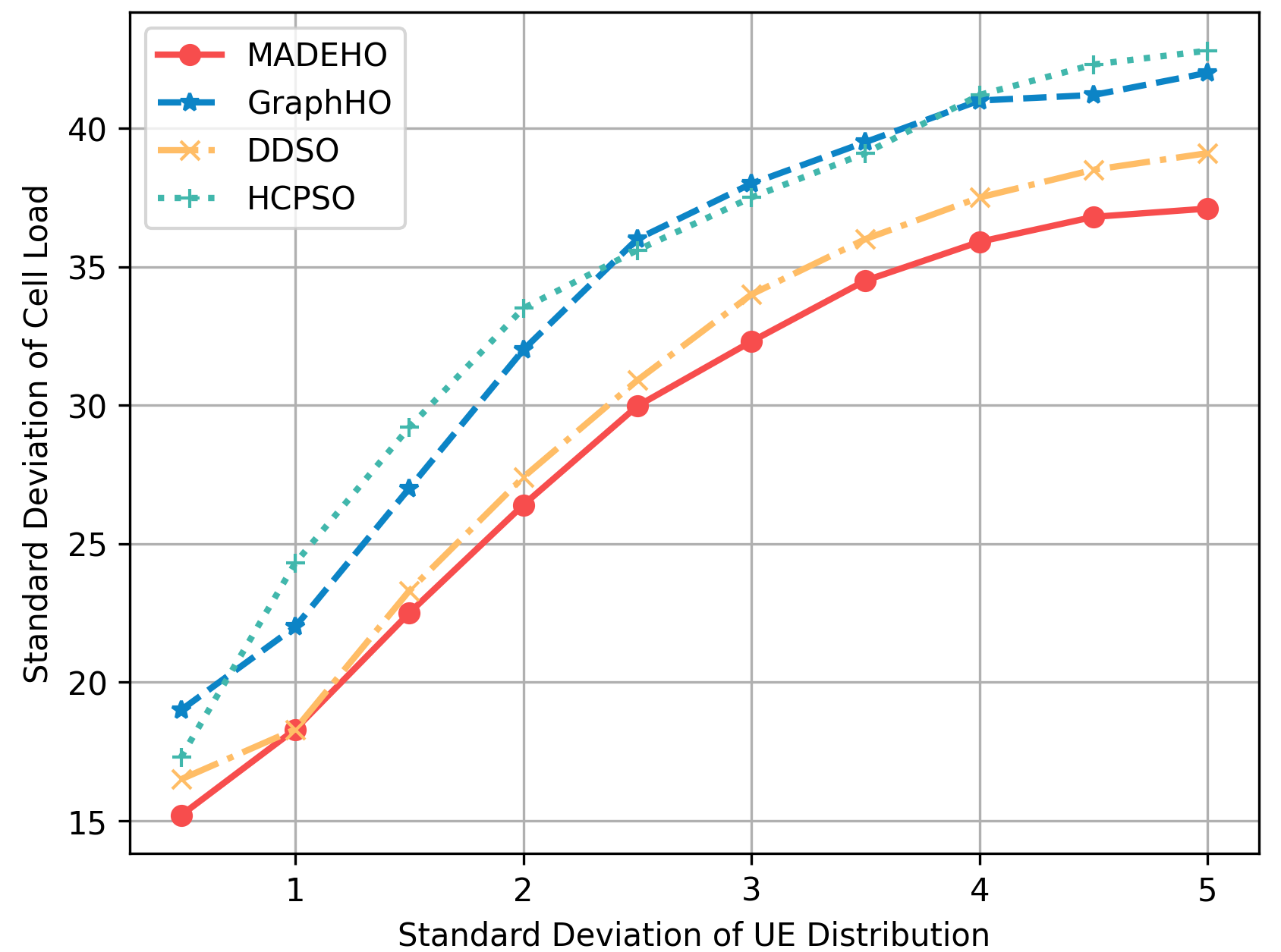}
\caption{UE distribution std vs. cell load std with the average UE speed limited in $[24, 26] \,\text{m/s}$.  }
\label{Fig.UEDistribution}
\end{figure}

\par Fig. \ref{Fig.MovingSpeed} represents the change of the cell load std with the rise of the average UE speed. The UE distribution std is controlled within $[2.9, 3.1]$. The cell load std is separately lower than those of the other schemes by $3.95\%$, $14.14\%$, and $17.0\%$.

\begin{figure}[htbp] 
\centering 
\includegraphics[width=0.45\textwidth]{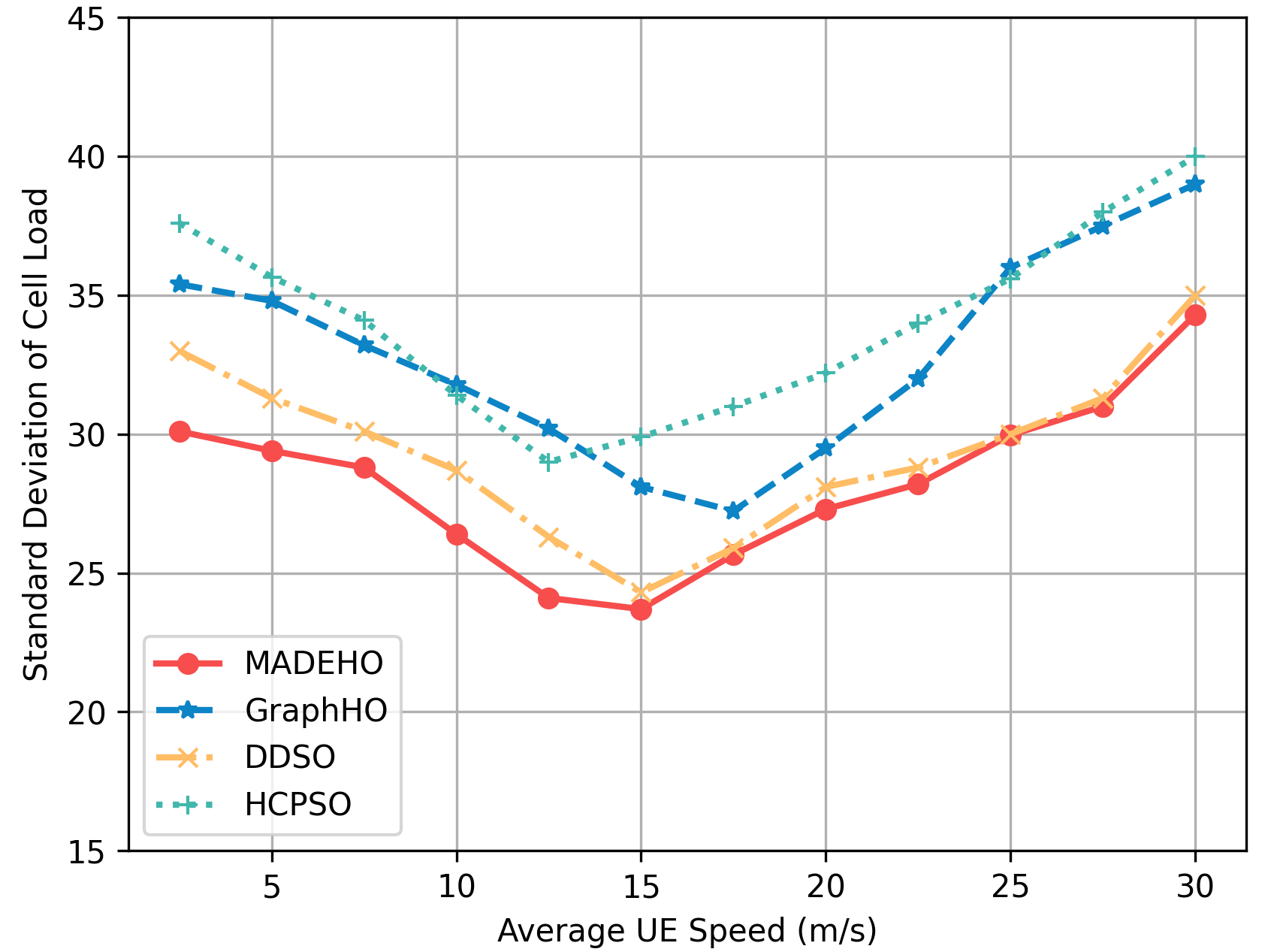}
\caption{Average UE speed vs. cell load std with the UE distribution std controlled within $[2.9, 3.1]$. }
\label{Fig.MovingSpeed}
\end{figure}

\begin{figure*}[!htb] 
\centering 
\includegraphics[width=\textwidth]{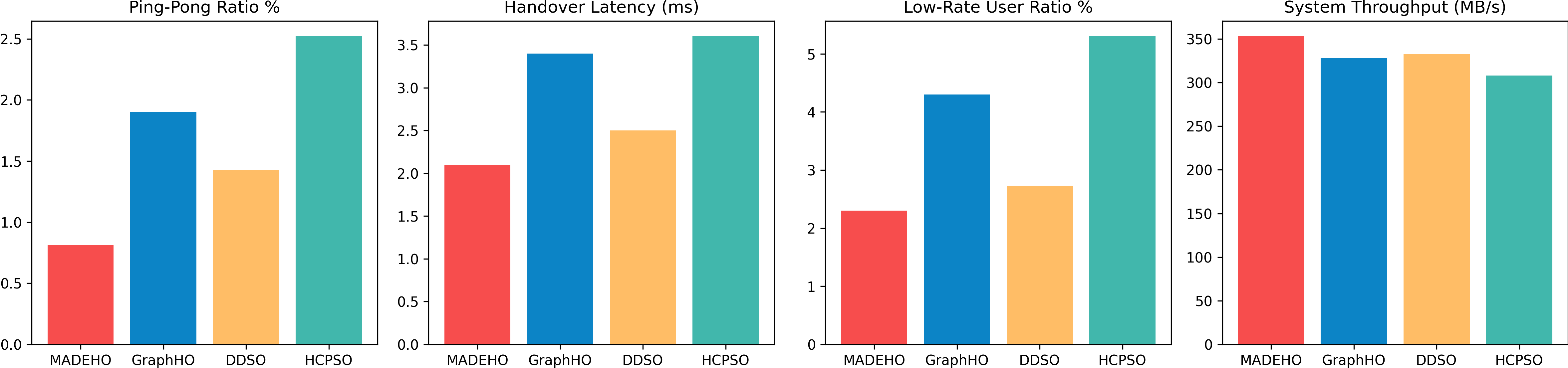}
\caption{Comparison of the four schemes in terms of ping-pong ratio, handover latency, system throughput, and low-rate user ratio.}
\label{Fig.bar}
\end{figure*}

\begin{figure}[htbp] 
\centering 
\includegraphics[width=0.45\textwidth]{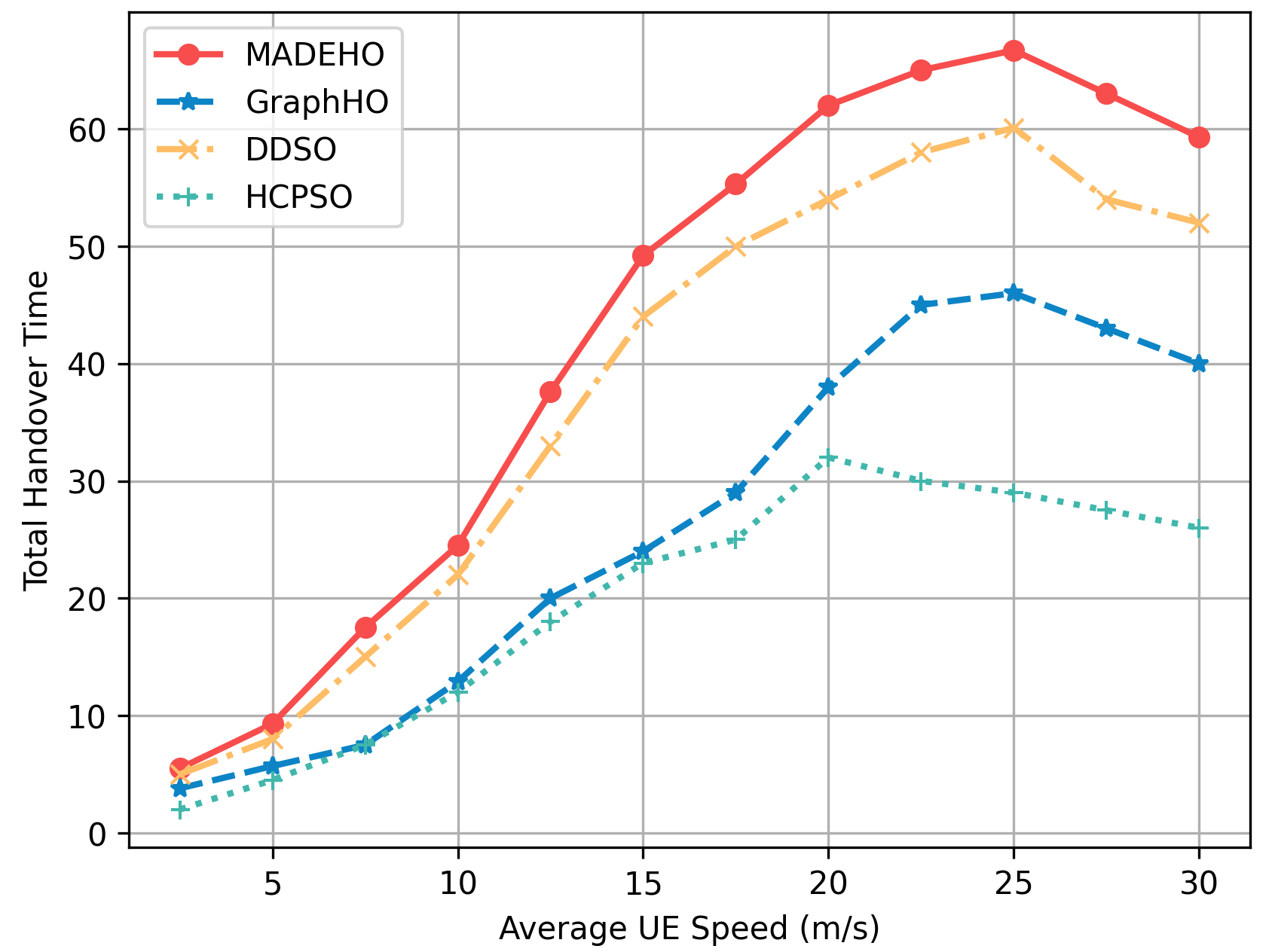}
\caption{Average UE speed vs. total handover time with the UE distribution std controlled within $[2.9, 3.1]$. }
\label{Fig.HandoverTime}
\end{figure}
\par To further illustrate the relationship between the average UE speed and load balancing difficulty, we show in Fig. \ref{Fig.HandoverTime} the curve of the total handover time within a period as the average UE speed changes, where the UE distribution std is controlled within $[2.9, 3.1]$. As the average UE speed increases, the handover time first increases since the UEs walk through different cells, and then decreases due to the unstable signal qualities. Handover is essential for achieving network load balancing. So the load balancing degree in Fig. \ref{Fig.MovingSpeed} first goes up and then falls down with the rise of the average UE speed.

\subsection{Handover and Network Performance Comparison}

While our primary objective is to achieve load balancing between cells, it is not the sole metric for evaluating cell handover and network performance. At the handover level, the ping-pong ratio (i.e., the switch-back ratio) and handover latency are also crucial considerations. Furthermore, at the network performance level, both system throughput and the low-rate UE ratio require attention. In this section, we define four performance metrics and compare the proposed scheme with baselines in terms of these performance metrics.

\begin{itemize}
  \item  \textbf{ping-pong ratio}: We define a ping-pong handover as a UE entering and exiting the same cell within a short period $T_{pp}$, which is similar to playing ping-pong. The ping-pong ratio refers to the proportion of ping-pong handovers in all handovers. Here we set $T_{pp}$ as $5$ time slots. A high ping-pong ratio indicates that the boundaries between cells are unclear or they significantly overlap.
  \item \textbf{handover latency}: The handover latency refers to the time required for a UE to complete the handover from the moment it starts listening in an inter-frequency handover. A long handover latency indicates that the UE takes a considerable amount of time to find a suitable cell, which suggests that the reporting parameters are too high or that cells are not sufficiently deployed.  
  \item \textbf{system throughput}: The system throughput equals to the total amount of data received by all UEs within a period, which reflects the overall network capacity.
  \item \textbf{low-rate user ratio}: We use this metric to represent the average proportion of UEs whose downlink rate is lower than a certain threshold $R_{low}$ during each time slot within a given period. Here we set $R_{low}$ as $1 \, \text{MB/s}$. The metric places great emphasis on user fairness.

\end{itemize}

\par Fig. \ref{Fig.bar} shows the performance of the four schemes in terms of the four metrics. The ping-pong ratio of our proposed scheme is $0.78\%$, which is the lowest among the four schemes. In terms of the handover latency, the four schemes reach $2.10 \, \text{ms}$,  $3.42 \, \text{ms}$,  $2.48 \, \text{ms}$,  and $3.55 \, \text{ms}$, respectively. As for the low-rate user ratio, our scheme achieves $2.21\%$, which is also lower than others. The system throughput of our scheme is $353 \, \text{MB/s}$, which is higher than those of the others by $7.62\%$, $6.0\%$, and $14.61\%$. The results demonstrate that our scheme can ensure network capacity and user fairness while implementing load balancing.

\subsection{Impact of Cell Frequency Distribution on Network Performance}
\par The previous experimental results are based on a given cell frequency distribution. However, different cell frequency distributions may lead to various network performance. In this part, we investigate the impact of adjusting the cell frequency distribution on load balancing and system throughput. Specifically, we modify the frequencies of some cells by adjusting certain 2.6 GHz and 4.9 GHz cells to 700 MHz without changing their position. Then we provide $1000$ episodes to test the modified environments, where the UE distribution std ranges from [2.9, 3.1], and the average UE speed is constrained between $24\,\text{m/s}$ to $26\,\text{m/s}$. In the modified environment, the ratios of intra-frequency and inter-frequency handovers change. We use the variable \textbf{intra-frequency neighboring cell ratio} to reflect this change, which is defined as the average proportion of intra-frequency neighboring cells of each cell.
\begin{figure}[htbp] 
\centering 
\includegraphics[width=0.45\textwidth]{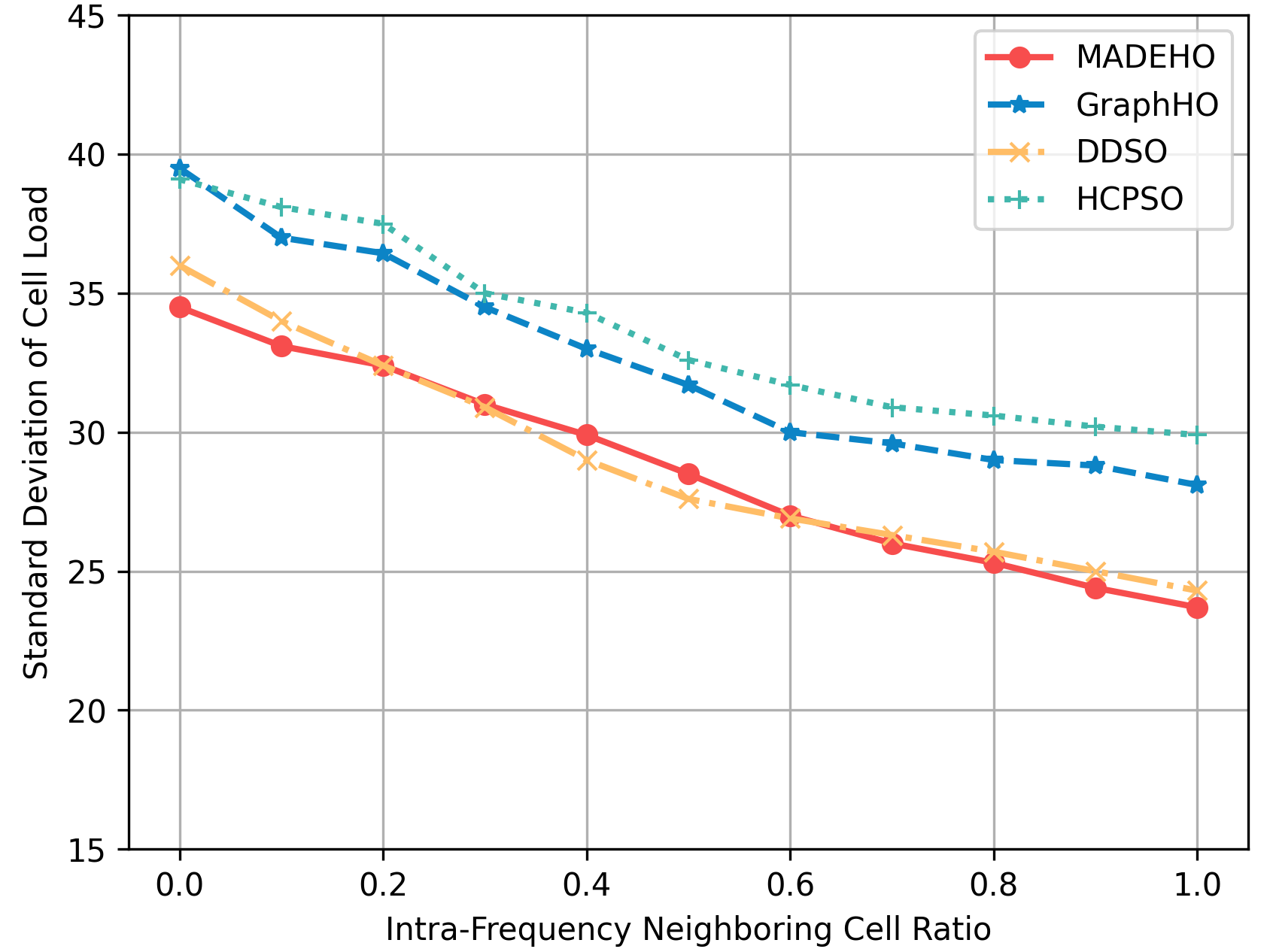}
\caption{Intra-frequency neighboring cell ratio vs. cell load std with the average UE speed in $[24, 26] \,\text{m/s}$ and the UE distribution std in $[2.9, 3.1]$. }
\label{Fig.Intraload}
\end{figure}

\par Fig. \ref{Fig.Intraload} shows that the cell load std falls down with the growth of the intra-frequency neighboring cell ratio. This is because intra-frequency handovers don't require monitoring and take less time, which makes it easier for UEs to switch to other cells, thus achieving load balancing. Our proposed scheme performs well in balancing load.

\begin{figure}[htbp] 
\centering 
\includegraphics[width=0.45\textwidth]{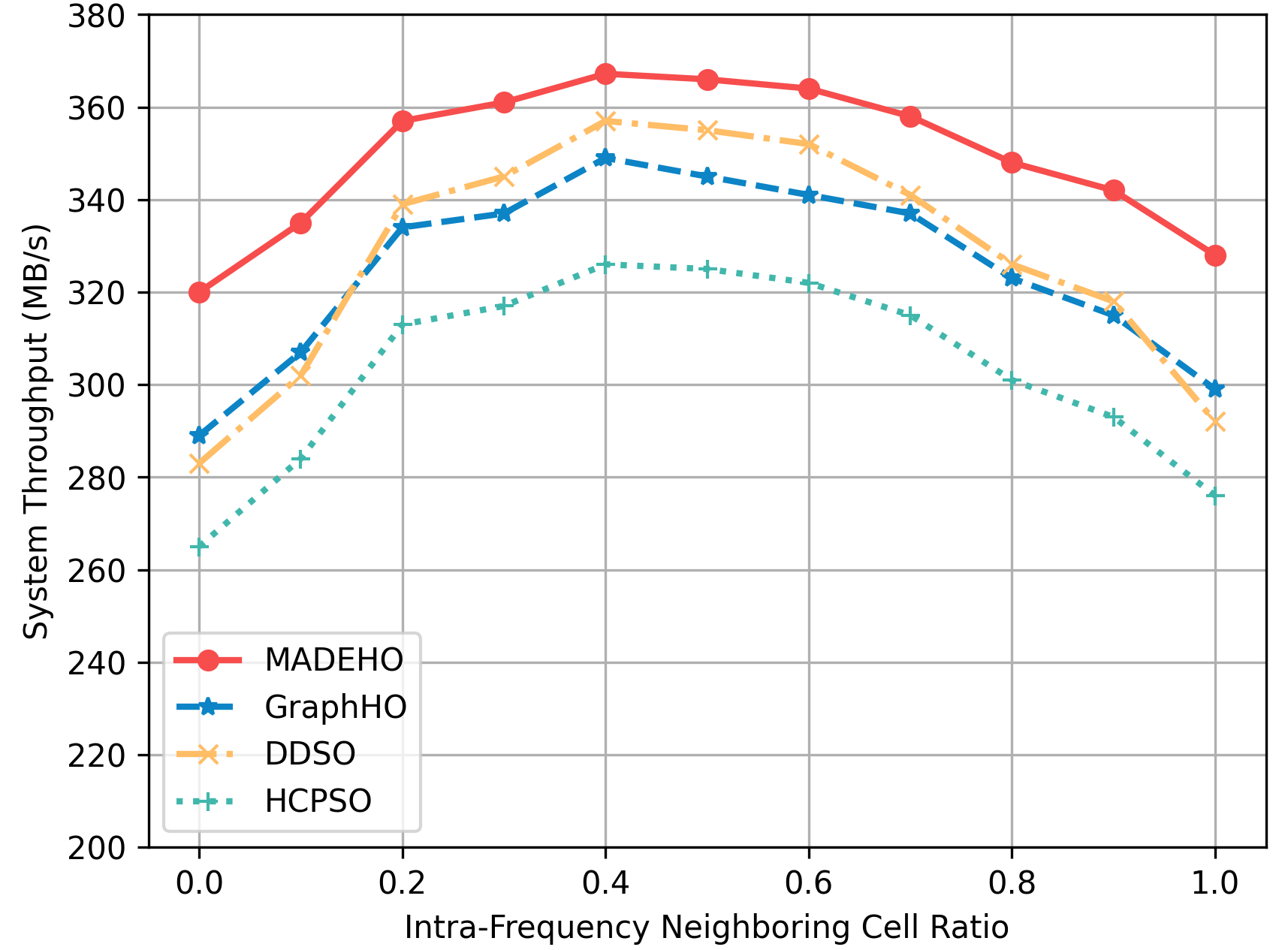}
\caption{Intra-frequency Neighboring cell ratio vs. system throughput with the average UE speed in $[24, 26] \,\text{m/s}$ and the UE distribution std in $[2.9, 3.1]$.}
\label{Fig.IntraThroughput}
\end{figure}

\par As shown in Fig. \ref{Fig.IntraThroughput}, the system throughput first rises up then turns down with a growing intra-frequency neighboring cell ratio. The reason is that a large intra-frequency neighboring cell ratio helps to reduce handover time and achieve load balancing. However, when it becomes too large, the interference between cells significantly increases, thereby reducing the system throughput. It is evident that our scheme outperforms other benchmarks.

\section{Conclusion}\label{conclusion}
In order to achieve load balancing in 5G cellular networks, in this paper we propose an MARL-based decentralized handover parameter optimization scheme. Firstly, we categorize cell handover into three types, model them following a four-step process, and define their handover conditions separately. Then we formulate the problem as minimizing the load standard deviation. This is a joint optimization problem that is difficult to solve directly. To promote cell cooperation, we solve it using an MARL method where decentralized training is implemented to reduce the cost of global real-time interaction. In detail, we replace the average load with the dynamic average consensus approximation based on local communication and prove that the error is bounded by a constant. Experimental results show that our proposed scheme significantly outperforms others in achieving load balancing and improving network performance.

\appendix[Proof of the error bound]\label{appendix}
\begin{proof}
The error between the estimate $\rho_{m,t}$ and the average load $\bar{L}_t$ is
\begin{equation}
    e_{m,t} = \rho_{m,t} - \bar{L}_t.
\end{equation}
We use $\boldsymbol{\rho_t} = [\rho_{1,t}, \dots, \rho_{M,t}]^T$ as the estimate vector at time slot $t$ and $\boldsymbol{e_t} = [e_{1,t}, \dots, e_{M,t}]^T$ as the error vector at time slot $t$. The update process (\ref{update}) can be rewritten as 
\begin{equation} \label{update2}
    \rho_{m,t+1}=\dot{L}_{m,t}+\sum_{j \in \textrm{N}_m}\frac{1}{\lvert\textrm{N}_m \lvert}\rho_{j,t}.
\end{equation}
We introduce $\rho_{m,t+1}=e_{m,t+1}+\bar{L}_{t+1}$ and $\rho_{j,t}=e_{j,t}+\bar{L}_t$ into (\ref{update2}), then we have
\begin{equation}\label{update3}
    e_{m,t+1}=\dot{L}_{m,t}-\dot{\bar{L}}_t+\sum_{j \in \textrm{N}_m}\frac{1}{\lvert\textrm{N}_m \lvert}e_{j,t},
\end{equation}
where $\dot{\bar{L}}_t=\bar{L}_{t+1}-\bar{L}_t$. We define $\boldsymbol{d}_L=\boldsymbol{\dot{L}_t}-\dot{\bar{L}}_t\boldsymbol{1}$ as the dynamic disturbance vector, $\boldsymbol{\dot{L}_t}= [\dot{L}_{1,t},\dots, \dot{L}_{M,t}]^T$. Since $\sup|\dot{L}_{m,t}| \leq \zeta < \infty$, we can deduce that 
\begin{equation}
    \|\boldsymbol{d}_L\|_\infty \leq  \|\boldsymbol{\dot{L}_t}\|_\infty+ \|\dot{\bar{L}}_t\boldsymbol{1}\|_\infty \leq 2\zeta.
\end{equation}
Then we can aggregate (\ref{update3}) for each $m$ into a vector form:
\begin{equation}
  \boldsymbol{e_{t+1}} =\boldsymbol{d}_L+\boldsymbol{\omega}\boldsymbol{e_{t}},
\end{equation}
where $\boldsymbol{\omega}$ is the consensus matrix related to the graph $G$. $\boldsymbol{\omega}_{m,j}=\frac{1}{\lvert\textrm{N}_m \lvert}$ if $j \in \textrm{N}_m $ and $\boldsymbol{\omega}_{m,j}=0$ otherwise. Considering that $\boldsymbol{\omega}$ is a row stochastic matrix where the sum of all elements in each row is $1$, and the graph \( G \) is connected and irreducible, the spectral radius of $\boldsymbol{\omega}$ in this error subspace $\boldsymbol{e_t}$ is less than 1 according to Perron-Frobenius theorem \cite{Perron-Frobenius}. Therefore, there exists a spectral radius $\lambda \in (0,1)$, such that for all $t$:

\begin{equation}
    \| \mathbf{e}_{t+1} \|_\infty \leq \lambda \| \mathbf{e}_t \|_\infty + 2 \zeta.
\end{equation}
This is a linear recursive relation whose solution is:
\begin{equation}
   \| \mathbf{e}_t \|_\infty \leq \lambda^{t-1} \| \mathbf{e}_1 \|_\infty + \frac{2 \zeta}{1 - \lambda}. 
\end{equation}
Since $\sup|L_{m,t}| \leq \zeta < \infty$ and $L_{m,t}>0$, the initial error satisfies $ \| \mathbf{e}_1 \|_\infty \leq \zeta $, we have:
\begin{equation}
    \| \mathbf{e}_t \|_\infty \leq  \zeta \lambda^{t-1} + \frac{2 \zeta}{1 - \lambda} \leq \frac{3-\lambda}{1-\lambda}\zeta
\end{equation}
As $ t $ increases, the term $  \zeta \lambda^{t-1} $ approaches zero, so we have:
\begin{equation}
    \| \mathbf{e}_t \|_\infty \leq \frac{2 \zeta}{1 - \lambda}.
\end{equation}
Therefore, for any cell $m$ and time $t$, the error satisfies:
\begin{equation}
    |e_{m,t}| \leq \| \mathbf{e}_t \|_\infty \leq \frac{3-\lambda}{1-\lambda}\zeta,
\end{equation}
where $ \lambda $ is the spectral radius of $ \omega$ in the error subspace $\boldsymbol{e_t}$, satisfying $0 < \lambda < 1$.
When $t \to \infty$ and each $\rho_{m,t}$ updates to a steady state, the error is by a smaller constant:
\begin{equation}
    \lim_{t \to \infty}|e_{m,t}| \leq \lim_{t \to \infty}\| \mathbf{e}_t \|_\infty \leq \frac{2\zeta}{1-\lambda}.
\end{equation}

\end{proof}

\begin{spacing}{1.1}
\bibliographystyle{IEEEtran}  
\bibliography{ref}     
\end{spacing}
\end{document}